\theoremstyle{plain}
\newtheorem{thm}{Theorem}
\newtheorem{cor}{Corollary}
\newtheorem{lem}{Lemma}
\theoremstyle{definition}
\newtheorem{dfn}{Definition}
\newcommand{\mc}{\mathcal}
\newcommand{\mO}{\mathcal{O}}
\newcommand{\tr}{\text{Tr}}
\newcommand{\mQ}{\mathcal{Q}}
\newcommand{\mH}{\mathcal{H}}
\newcommand{\mP}{\mathcal{P}}
\newcommand{\bma}{\bm{\alpha}}
\newcommand{\bmb}{\bm{\beta}}
\newcommand{\bmt}{\bm{\theta}}
\newcommand{\bme}{\bm{\epsilon}}
\newcommand{\bmn}{\bm{n}}
\newcommand{\Qb}{Q_\text{bit}}
\begin{document}

\title{Resource-efficient equivariant quantum convolutional neural networks}

\author{Koki Chinzei}\thanks{chinzei.koki@fujitsu.com}
\author{Quoc Hoan Tran}
\author{Yasuhiro Endo}
\author{Hirotaka Oshima}
\affiliation{Quantum Laboratory, Fujitsu Research, Fujitsu Limited,
	4-1-1 Kawasaki, Kanagawa 211-8588, Japan}
\date{\today}

\begin{abstract}
Equivariant quantum neural networks (QNNs) are promising variational models that exploit symmetries to improve machine learning capabilities.
Despite theoretical developments in equivariant QNNs, their implementation on near-term quantum devices remains challenging due to limited computational resources.
This study proposes a resource-efficient model of equivariant quantum convolutional neural networks (QCNNs) called equivariant split-parallelizing QCNN (sp-QCNN). 
Using a group-theoretical approach, we encode general symmetries into our model beyond the translational symmetry addressed by previous sp-QCNNs. 
We achieve this by splitting the circuit at the pooling layer while preserving symmetry.
This splitting structure effectively parallelizes QCNNs to improve measurement efficiency in estimating the expectation value of an observable and its gradient by order of the number of qubits. 
Our model also exhibits high trainability and generalization performance, including the absence of barren plateaus. 
Numerical experiments demonstrate that the equivariant sp-QCNN can be trained and generalized with fewer measurement resources than a conventional equivariant QCNN in a noisy quantum data classification task.
Our results contribute to the advancement of practical quantum machine learning algorithms.
\end{abstract}

\maketitle

\section{Introduction}

Demonstrating practical quantum advantages is a significant challenge in quantum information science.
While several quantum algorithms, such as Shor's factoring~\cite{Shor1997-bt} and Grover's search~\cite{Grover1997-zk}, are believed to achieve this goal, they require a large-scale fault-tolerant quantum computer, which will prevent the early realization of quantum advantages~\cite{Gidney2021-ds}. 
Given this situation, numerous studies have focused on developing quantum algorithms that can be implemented on near-term quantum devices~\cite{Preskill2018-xr}. 
One promising approach is using variational quantum algorithms (VQAs), where the parameterized quantum circuit (PQC) is optimized to solve a given problem with quantum and classical computers~\cite{Cerezo2021-un}.
The VQAs apply to various tasks, such as finding the ground and excited states of Hamiltonians~\cite{Peruzzo2014-oj}, learning unknown patterns and relationships from data (i.e., quantum machine learning, QML)~\cite{Farhi2018-nt, Mitarai2018-ap, Benedetti2019-ph, Schuld2020-vz}, and solving combinatorial optimization problems~\cite{Farhi2014-as}.

Despite fundamental interests in VQAs, several obstacles exist that prevent them from achieving quantum advantages.
The most critical issue in VQAs is the poor trainability of PQCs.
This is often due to barren plateaus, where the exponentially flat landscape of the cost function causes the optimization process to fail, preventing quantum speedups~\cite{McClean2018-qf, Cerezo2021-tq, Ortiz-Marrero2021-lx, Holmes2022-uk}.
While previous studies have proposed many PQCs that can avoid the barren plateau phenomenon by utilizing polynomial-size subspaces instead of the entire Hilbert space~\cite{Larocca2024-vh}, such PQCs face another critical challenge: classical simulability~\cite{Cerezo2023-hz, Bermejo2024-dg}.
Recent studies indicate that provably barren plateau-free models can be efficiently simulated on classical computers when applied to locally easy datasets, highlighting the fundamental challenge in achieving exponential quantum speedups.
Nonetheless, some potential quantum advantages may still exist. 
Since classical simulability has been demonstrated only for locally easy datasets, investigating QNN performance on locally nontrivial datasets presents an intriguing opportunity to uncover potential quantum benefits~\cite{Bermejo2024-dg}.
In addition, classical simulability does not preclude the possibility of polynomial improvements in training and inference efficiency~\cite{Cerezo2023-hz}.
Further exploring and understanding the potentials and limitations of VQAs is important for developing more effective quantum algorithms.

In these situations surrounding VQAs, symmetry plays a vital role in maximizing their potential.
Geometric quantum machine learning (GQML) is a methodology that leverages the geometric structure (e.g., symmetry) of data to solve machine learning tasks with high scalability~\cite{Bronstein2021-is, Verdon2019-pi, Zheng2023-ze, Larocca2022-mj, Meyer2023-vx, Skolik2023-lq, Ragone2022-va, Nguyen2024-nt, Sauvage2024-vd, Anschuetz2022-gb, Anschuetz2024-ho}.
In GQML, symmetry is encoded into an equivariant quantum neural network (QNN) as an inductive bias, reducing the parameter space to search.
This leads to high trainability and generalization (although it does not necessarily resolve the classical simulability issue~\cite{Cerezo2023-hz}).
For instance, permutation equivariant QNNs were theoretically proven not to show barren plateaus and to improve the generalization performance~\cite{Schatzki2024-il}.
The theory of equivariant QNNs has been extended to several circuit models, including the quantum convolutional neural network (QCNN)~\cite{Grant2018-re, Cong2019-ov, Nguyen2024-nt}, which is a representative hierarchical QNN with high trainability and feasibility~\cite{Pesah2021-py}.

\begin{figure*}[t]
	\centering
	\includegraphics[width=\linewidth]{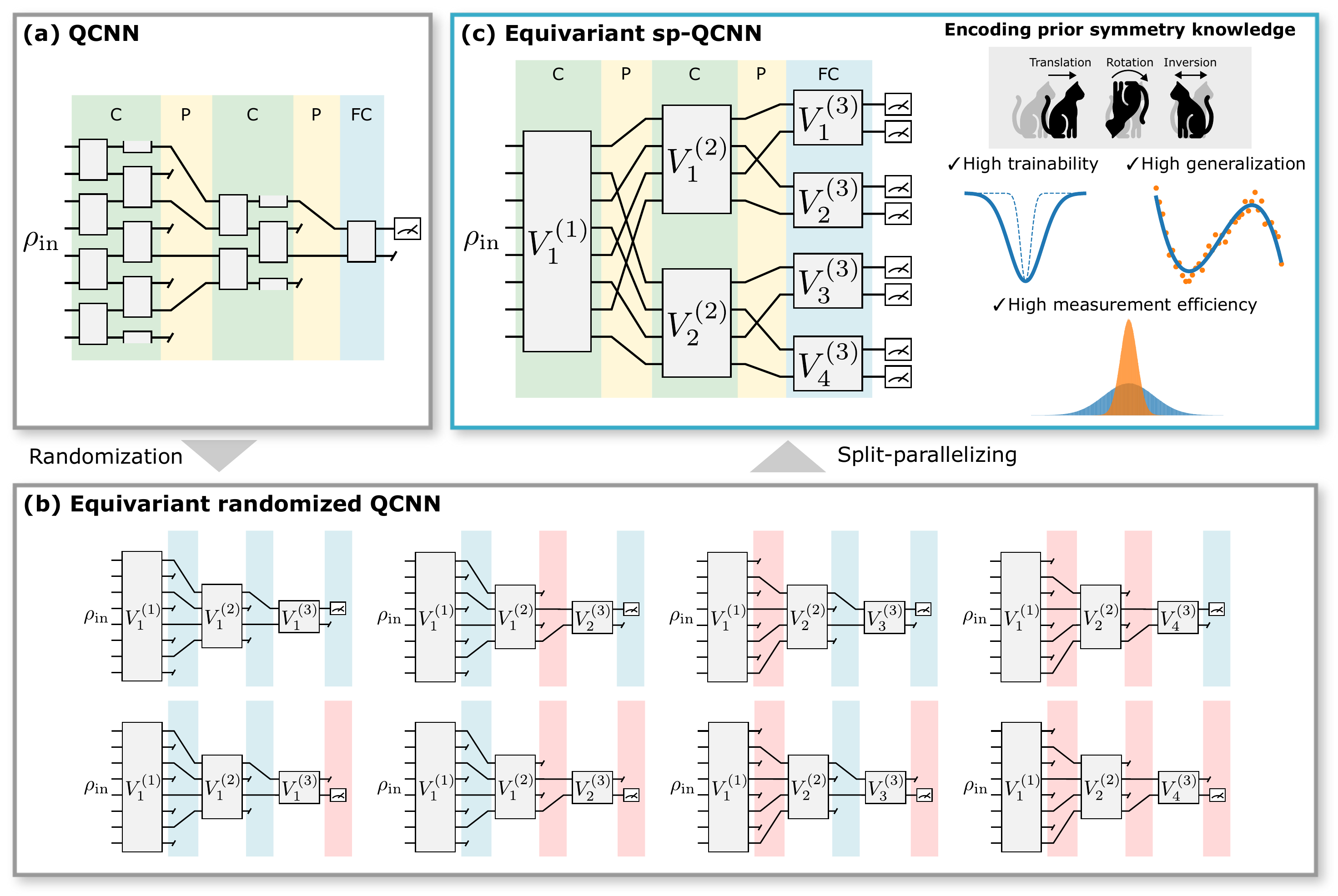}
	\caption{
Basic structures of (a) conventional QCNN~\cite{Cong2019-ov}, (b) equivariant randomized QCNN~\cite{Nguyen2024-nt}, and (c) equivariant sp-QCNN.
(a) In the conventional QCNN, some qubits are discarded at each pooling layer, and the remaining qubits are measured at the end of the circuit.
(b) The equivariant randomized QCNN randomly chooses which qubits to discard at each pooling layer to ensure the equivariance.
The blue (red) bars indicate the pooling layer that discards even-(odd-)numbered qubits. 
(c) The equivariant sp-QCNN splits the circuit to impose the equivariance rather than discarding the qubits.
This splitting structure with equivariance results in high trainability, generalization, and measurement efficiency.
	}
	\label{fig: illust}
\end{figure*}

In practice, the high computational cost of QNNs, i.e., many repetitions of measurements during training, makes it challenging to implement them on resource-limited near-term quantum devices~\cite{Schuld2022-wb, Liu2023-pv, Chinzei2025-pa}.
Solving large-scale problems requires large amounts of training data for high generalization, expressive quantum models for high accuracy, and many measurement shots to accurately estimate expectation values, resulting in substantial computational costs.
Recently, an alternative model of equivariant QCNNs, split-parallelizing QCNN (sp-QCNN), has been proposed to reduce the measurement cost in expectation value estimation~\cite{Chinzei2024-nm}.
The sp-QCNN splits the quantum circuit into multiple branches at the pooling layers instead of discarding the qubits (Fig.~\ref{fig: illust}).
This splitting structure maximizes the use of the qubit resource and is thus more efficient than the conventional QCNNs in terms of the number of measurement shots required to achieve a certain accuracy in estimating the expectation value of an observable.
Therefore, the sp-QCNN is especially promising for near-term quantum devices with limited computational resources.

Nevertheless, the previous study~\cite{Chinzei2024-nm} has focused only on situations where the input data is translationally symmetric, severely limiting its applicability. 
For instance, chemical molecules, a major target in quantum computation, do not exhibit translational symmetry; instead, they possess other complex symmetries such as rotations and inversions. 
Extending the sp-QCNN framework to these more general symmetry groups presents a nontrivial challenge. 
Due to its unique splitting-parallelizing structure, directly applying existing techniques for imposing equivariance in common QNN architectures~\cite{Meyer2023-vx, Nguyen2024-nt} is not feasible for sp-QCNNs. 
This inherent architectural difference necessitates the development of an entirely new theoretical and methodological framework to construct general equivariant sp-QCNNs. 
Overcoming this challenge is crucial for expanding the practical utility of sp-QCNNs to a wider array of scientifically relevant problems.

In this work, we propose the framework of equivariant sp-QCNNs for general symmetries beyond translational symmetry and theoretically elucidate their practical advantages.
Our model encodes general symmetries into the quantum circuit to ensure equivariance by splitting the circuit based on the symmetry.
This splitting structure effectively parallelizes QCNNs to improve the measurement efficiency by a factor of $\mO(n)$ for general symmetries compared to the conventional model, where $n$ is the number of qubits.
Furthermore, the equivariance leads to high trainability and generalization, as in the conventional equivariant QCNN.
In particular, our model inherits the high trainability of QCNNs and does not suffer from barren plateaus.
To leverage these advantages, we present a group-theoretical method for constructing equivariant sp-QCNNs, which provides the theoretical basis for general model design.
To demonstrate the high performance of equivariant sp-QCNNs, we apply our model to the classification task of noisy Heisenberg model ground states with square lattice symmetry.
The result shows that the high measurement efficiency of the equivariant sp-QCNN suppresses the statistical error in estimating the expectation value, accelerating the training process compared to a conventional equivariant QCNN. 
Moreover, the equivariant sp-QCNN can achieve high classification accuracy with fewer training data than a non-equivariant model, indicating the high generalization of our model.

The remainder of this paper is organized as follows.
First, Sec.~\ref{sec: preliminaries} provides a concise overview of QCNNs and equivariant QNNs.
Section~\ref{sec: sp-QCNN} formulates the equivariant sp-QCNN and shows its high measurement efficiency in estimating the expectation value of an observable and its gradient compared to the conventional equivariant QCNN.
We also prove the absence of barren plateaus in the sp-QCNN under modest assumptions that its subcircuits (equivalent to conventional QCNNs) do not exhibit barren plateaus.
Section~\ref{sec: construction} introduces a group-theoretical method for constructing the equivariant sp-QCNNs and provides the theoretical basis for general model design.
Section~\ref{sec: demo} numerically demonstrates the effectiveness of the equivariant sp-QCNN, showing that it outperforms conventional equivariant and non-equivariant QCNNs when the amounts of measurement resources and data are limited.
Finally, Sec.~\ref{sec: conclusions} summarizes our proposal and results and discusses potential future research directions.


\section{Quantum convolutional neural networks and equivariance} \label{sec: preliminaries}

This section concisely reviews QCNNs and equivariant QNNs.
In particular, we introduce the randomized model of equivariant QCNNs.

\subsection{Quantum convolutional neural network}

The QCNN is a promising QNN architecture inspired by the classical convolutional neural network (CNN)~\cite{LeCun2015-yl, Krizhevsky2012-ya, LeCun1995-yw}. 
This model applies to a variety of machine learning tasks, such as quantum phase recognition~\cite{Herrmann2022-bs, Liu2022-uu, Monaco2023-xr}, high-energy event classification~\cite{Chen2022-gy, Nagano2023-ww}, and quantum error-correcting code optimization~\cite{Cong2019-ov}.
Similar to the CNN, the QCNN has a hierarchical structure consisting of three types of layers: convolutional, pooling, and fully connected layers~\cite{Grant2018-re, Cong2019-ov}.
The convolutional layers utilize local unitary gates to extract the local features from the input data,
while the pooling layers discard some qubits to coarse-grain the quantum information. 
In the originally proposed QCNN, the pooling layers are implemented by measuring some qubits and applying unitary gates, conditioned on the measurement outcomes, to the neighboring qubits. 
However, in this work, we implement the pooling layer solely with unitary gates, leveraging the equivalence between conditioned-gate operations and controlled-unitary operations~\cite{Nielsen2010-pf}.
After alternately applying the convolutional and pooling layers, one performs the fully connected transformation and then measures the remaining qubits to obtain an output. 
In supervised learning, the unitary gates in the circuit are optimized to represent the correct input-output relationship.

High feasibility and trainability are notable properties of QCNNs.
Since the number of qubits $n$ in the QCNN decreases exponentially with each pooling layer, the circuit depth is $\mO(\log n)$. 
This logarithmic depth leads to easy implementation in near-term quantum devices where the number of possible gate operations is limited.
Moreover, it is known that the QCNN does not suffer from barren plateaus~\cite{McClean2018-qf, Cerezo2021-tq, Ortiz-Marrero2021-lx, Holmes2022-uk}, the exponentially flat landscape of the cost function, due to the logarithmic circuit depth and the localities of unitary operations and observables~\cite{Pesah2021-py}. 
These properties imply the high scalability of QCNNs.

However, the high measurement resource requirement for training QCNNs presents practical difficulties.
Given that the QCNN has $\mO(n)+\mO(n/2)+\mO(n/4)+\cdots=\mO(n)$ parameters, it generally requires $\mO(nN_\text{train}N_\text{epoch}N_\text{shot})$ measurement shots for training in total, where $N_\text{train}$ is the number of training data, $N_\text{epoch}$ is the maximum epoch of training, and $N_\text{shot}$ is the number of measurement shots used per circuit.
Therefore, addressing large-scale problems that involve many qubits and a substantial amount of data is difficult in practice.
In this work, we demonstrate that the equivariant sp-QCNN can ideally reduce $N_\text{shot}$ by a factor of $\mO(1/n)$ for general symmetries, thus alleviating the measurement resource requirement.

\subsection{Equivariant QNN}

The GQML based on equivariant QNNs has recently emerged as a potential solution to some critical QML problems related to trainability and generalization~\cite{Bronstein2021-is, Verdon2019-pi, Zheng2023-ze, Larocca2022-mj, Meyer2023-vx, Skolik2023-lq, Ragone2022-va, Nguyen2024-nt, Sauvage2024-vd, Anschuetz2022-gb, Anschuetz2024-ho}.
It exploits the symmetry of a problem as an inductive bias and provides a problem-tailored circuit model, typically an equivariant QNN. 
For example, in supervised learning, the GQML assumes the label symmetry of the learning problem.
The label symmetry arises in various problems, such as image recognition, graph classification, and quantum physics.
To define the label symmetry, let $\rho$ be an input density matrix and $f(\rho)$ be a target function to be learned.
Then, the label symmetry $G$ is defined as
\begin{align}
    f(\rho) = f(U_g \rho U_g^\dag) \quad \forall g\in G, \,\forall \rho, \label{eq: label symmetry}
\end{align}
where $U_g$ is a unitary representation of the group $G$.

The GQML encodes this label symmetry into a quantum circuit as an equivariant QNN to improve trainability and generalization.
The equivariant QNN is a parametrized quantum circuit $U(\bmt)$ that is invariant under the action of $G$:
\begin{align}
    [U(\bmt),U_g]=0 \quad \forall g\in G.
\end{align}
This symmetry leads to the equivariance between input and output quantum states, $U(\bmt)[U_g \rho U_g^\dag]U^\dag(\bmt) = U_g[U(\bmt)\rho U^\dag(\bmt)]U_g^\dag$.
That is, applying the symmetry operation to the input state is identical to applying it to the output state.
Given this equivariant QNN and a $G$-symmetric observable $O$, the label symmetry of Eq.~\eqref{eq: label symmetry} always holds with $f(\rho)=\tr[U(\bmt)\rho U^\dag(\bmt) O]$, thus improving trainability and generalization.

The equivariant QNNs can be designed using several methods applicable to various QNN models, including the twirling method, the null space method, and the Choi operator method~\cite{Meyer2023-vx, Nguyen2024-nt}.
In QCNNs, however, the pooling layers generally break spatial symmetries (e.g., translation, rotation, and inversion of qubit positions) by discarding some qubits in the middle of the circuit, which prevents the straightforward implementation of spatially equivariant QCNNs.
To circumvent such difficulties, a randomized technique was introduced to impose the spatial equivariance of the pooling layer~\cite{Nguyen2024-nt}.
This technique randomly chooses which qubits to discard in each pooling layer for each measurement shot based on a given symmetry.
For example, for translational symmetry $T$ (e.g., $T\ket{10\cdots0}=\ket{010\cdots0}$), the equivariance is achieved by randomly choosing which to discard even- or odd-numbered qubits every measurement shot.
Then, the quantum operation for an input $\rho$ is given by $\rho \to (U_\text{e} \rho U_\text{e}^\dag + U_\text{o} \rho U_\text{o}^\dag)/2$, where $U_\text{e(o)}$ is the unitary operation acting on the even-(odd-)numbered qubits after the pooling layer with $TU_\text{e} T^\dag = U_\text{o}$ [see Fig.~\ref{fig: illust} (b)].
In other words, the randomized technique classically mixes the multiple QCNNs to ensure the equivariance.

The sp-QCNN is an alternative approach that coherently executes the multiple QCNNs to achieve the equivariance~\cite{Chinzei2024-nm}.
As shown in Fig.~\ref{fig: illust} (c), the sp-QCNN splits the circuit in the pooling layers instead of randomly selecting which qubits to discard.
This splitting structure maximally makes use of the qubit resource, improving the measurement efficiency in estimating the expectation value of an observable compared to the randomized method.
In Ref.~\cite{Chinzei2024-nm}, the high measurement efficiency of sp-QCNN for translationally symmetric data was demonstrated based on the effective parallelization of conventional QCNNs.
However, the previous work focused only on the translationally symmetric data, limiting the applicability of sp-QCNNs.
This work extends the concept of sp-QCNNs to equivariant sp-QCNNs and develops their theoretical framework for general symmetries.

\section{Equivariant sp-QCNN} \label{sec: sp-QCNN}

This section formulates the equivariant sp-QCNN and shows that it improves the measurement efficiency by a factor of $\mO(n)$ compared to the conventional randomized QCNN.
We also prove that the sp-QCNN does not suffer from barren plateaus under modest assumptions, as do conventional QCNNs.

\subsection{Model}

The equivariant sp-QCNN is a resource-efficient model of equivariant QCNNs, consisting of circuit splitting and unitary operations [see Fig.~\ref{fig: illust} (c)].
Here, let $\Qb=[n]$ be the set of qubits whose element corresponds to each qubit (throughout this paper, we denote $[a]=\{1,2,\cdots,a\}$ with an integer $a$).
We write the circuit splitting in the $\ell$th layer as a partition of $\Qb$:
\begin{align}
    \Qb=\bigsqcup_{i=1}^{s_\ell} Q_i^{(\ell)}, \label{eq: Qdecomp}
\end{align}
where $Q_i^{(\ell)}$ is a subset of the qubits representing the $i$th branch of the $\ell$th layer, $s_\ell$ is the number of branches in the $\ell$th layer, and $\sqcup$ denotes the disjoint union.
Note that the branches are not overlapped [i.e., $Q_i^{(\ell)}\cap Q_j^{(\ell)}=\varnothing$ for $i\neq j$] and that they cover all qubits [i.e., $\bigsqcup_i Q_i^{(\ell)}= \Qb$].
We assume that each branch in the $(\ell+1)$th layer is connected to a corresponding single branch in the $\ell$th layer:
\begin{align}
    & \text{$\forall i, \exists j$ s.t. $Q^{(\ell+1)}_i \subseteq Q^{(\ell)}_j$.}
\label{eq:cc} 
\end{align}
To clarify, each branch can be split but cannot be merged to another branch. 
As discussed later, this splitting structure contributes to the effective parallelization of QCNNs.

The unitary of the entire circuit is given by
\begin{align}
    U = \prod_{\ell=1}^{L} V^{(\ell)},
\end{align}
where $V^{(\ell)}$ is the unitary of the $\ell$th convolutional layer.
The convolutional layer $V^{(\ell)}$ is decomposed into $s_\ell$ unitaries $V_i^{(\ell)}$, each of which acts on $Q_i^{(\ell)}$, as
\begin{align}
    V^{(\ell)}=\prod_{i=1}^{s_\ell} V_i^{(\ell)}. \label{eq: Velldecomp}
\end{align}
Since each $V_i^{(\ell)}$ acts on different qubits, they are mutually commutative: $[V_i^{(\ell)}, V_j^{(\ell)}]=0$.

When the label symmetry $G$ of a problem is known in advance, the equivariance for the symmetry can help improve the trainability and generalization of the quantum model.
Here, the $G$-equivariance of sp-QCNN at each layer is defined as
\begin{align}
    [V^{(\ell)},U_g]=0 \quad \forall g\in G. \label{eq:Vell_sym}
\end{align}
Along with a $G$-symmetric observable $O$ (i.e., $[U_g,O]=0$ $\forall g\in G$), this equivariance leads to the label symmetry of $f(\rho) = \tr[U \rho U^\dag O]$ as $f(U_g \rho U_g^\dag)=f(\rho)$ for $\forall \rho$ and $\forall g\in G$.

This equivariance effectively reduces the expressivity without sacrificing accuracy, thereby improving trainability and generalization.
According to Ref.~\cite{Schatzki2024-il}, the improved trainability can be theoretically proven for the overparameterized circuit, where the variance of the cost function scales with the inverse of the expressivity~\cite{Ragone2024-hl, Fontana2024-ky}.
The equivariance reduces the expressivity, thus mitigating the decline in the variance of the cost function (e.g., barren plateaus)~\cite{Schatzki2024-il}.
In addition, some previous studies suggest that the equivariance may help to eliminate bad local minima~\cite{Wiersema2020-yv, Kim2021-ej, Kim2022-fm, Anschuetz2022-gb, Larocca2023-ll, Anschuetz2024-ho}.
On the other hand, the improved generalization can be shown in an information-theoretical way, where generalization errors are upper bounded by the logarithm of the $\epsilon$-covering number~\cite{Schatzki2024-il, Caro2022-cf}, the number of unitaries required to cover the entire unitary subspace that the QNN can express within a margin of error $\epsilon$.
A large $\epsilon$-covering number indicates a large model expressivity (small bias), leading to a high generalization error (large variance), which can be understood as the bias-variance tradeoff.
The equivariance constrains the expressivity and lowers the covering number, leading to high generalization.

\subsection{Measurement efficiency} \label{sec: expectation_value}

In addition to improving trainability and generalization, our model exhibits higher measurement efficiency than the conventional randomized QCNN.
This high measurement efficiency comes from the effective parallelization of QCNNs.
To show this, let us examine the randomized model that corresponds directly to the equivariant sp-QCNN characterized by $Q_i^{(\ell)}$ and $V_i^{(\ell)}$.
The randomized QCNN randomly selects which branch to remain in each pooling layer for every measurement shot [Fig.~\ref{fig: illust} (b)].
Therefore, the $\ell$th convolutional layer classically mixes the input $\rho$ as $\rho \to \sum_{i=1}^{s_\ell} V^{(\ell)}_i \rho V^{(\ell)\dag}_i/s_\ell$.
The expectation value of a local observable $O=\sum_{i=1}^{n} O_i/n$ ($O_i$ an observable acting only on the $i$th qubit) in the randomized QCNN is defined as the classical summation of the expectation values for several circuits:
\begin{align}
    \braket{O}_\text{RD} = \frac{1}{n} \sum_{i=1}^{n}  \tr\left[ B_i \rho B_i^\dag O_i \right],
\end{align}
where $\braket{\cdot}_\text{RD}$ is the expectation value in the randomized QCNN, and $B_i=V^{(L)}_{i_L} \cdots V^{(1)}_{i_1}$ denotes the executed circuit of the randomized QCNN involving the $i$th qubit.
Here, $B_i$ can also be viewed as the subcircuit (or backward lightcone) of sp-QCNN associated with the $i$th qubit.
In other words, the randomized QCNN chooses one of the subcircuits $B_i$ randomly for every measurement shot, taking an average over the outputs of all the subcircuits.

In the absence of statistical errors in measurements, the sp-QCNN produces the same result as the randomized model.
One can show this by
\begin{align}
    \braket{O}_\text{sp}
    &=\tr(U\rho U^\dag O) \notag \\
    &= \frac{1}{n} \sum_{i=1}^{n} \tr(U\rho U^\dag O_i) \notag \\
    &= \frac{1}{n} \sum_{i=1}^{n} \tr( B_i\rho B_i^\dag O_i) \notag \\
    &= \braket{O}_\text{RD}, \label{eq: Osp=ORD}
\end{align}
where $\braket{\cdot}_\text{sp}$ is the expectation value in the sp-QCNN.
In the third line, we have used $U^\dag O_i U = B_i^\dag O_i B_i$, which is derived from the hierarchical circuit structure and the locality of $O_i$.
This result suggests that the sp-QCNN can coherently execute all the subcircuits in parallel, leading to higher measurement efficiency than the randomized model.
Note that all randomized models do not have corresponding sp-QCNNs because the branches are prohibited from overlapping in sp-QCNNs [i.e., $Q^{(\ell)}_i \cap Q^{(\ell)}_j = \varnothing$ for $i\neq j$].

The coherent parallelization of sp-QCNN enables us to obtain $n$ times more measurement outcomes per circuit execution than the randomized model, potentially leading to the $\mO(n)$ times improvement of measurement efficiency.
This high measurement efficiency reduces the number of measurement shots required to achieve a certain accuracy in estimating the expectation value of an observable.
Note that our model does not exactly improve the measurement efficiency by a factor of $n$ in general because the $n$ measurement outcomes are correlated to each other due to the quantum entanglement of the output state.
For example, when the output state is the GHZ state $\ket{\psi}=(\ket{00\cdots}+\ket{11\cdots})/\sqrt{2}$, the sp-QCNN cannot improve the measurement efficiency because the $n$ measurement outcomes are completely correlated and only one-bit information is available every measurement shot.
In contrast, when the output state is random, the sp-QCNN can improve the measurement efficiency by a factor of $\mO(n)$ because the $n$ measurement outcomes are not correlated effectively~\cite{Chinzei2024-nm}.
Given that the learning process in actual QML problems is generally complicated and can be considered approximately random (at least at the beginning of learning when the parameters are randomly initialized), we expect the sp-QCNN to be resource-efficient.
In Sec.~\ref{sec: demo}, we will verify that the sp-QCNN can considerably improve the measurement efficiency in a specific classification task.

\subsection{Gradient measurement efficiency} \label{sec: gradient}

The equivariant sp-QCNN also improves the measurement efficiency in the gradient estimation.
In VQAs, solving large-scale problems requires an efficient training algorithm.
Among various training algorithms, the gradient-based optimization method is promising, where the parameters are updated based on the gradient of the loss function $L$ as $\bmt\to\bmt-\eta\nabla L$ ($\eta$ is the learning rate).
However, the gradient estimation needs a high computational cost in quantum computing~\cite{Chinzei2025-pa}.
Therefore, improving the gradient measurement efficiency is crucial for the implementation of large-scale models.

The equivariant sp-QCNN allows us to execute the randomized QCNN in parallel even in the gradient measurement.
To show this, we consider the derivative of the expectation value of a local observable $O=\sum_{i=1}^{n} O_i/n$ in the sp-QCNN [see Eq.~\eqref{eq: Osp=ORD}]:
\begin{align}
    \partial_\mu \braket{O}_\text{sp}
    &= \frac{1}{n}\sum_{i=1}^{n} \partial_\mu \tr( B_i\rho B_i^\dag O_i), \label{eq: partial Osp}
\end{align}
where $\partial_\mu$ denotes $\partial/\partial \theta_\mu$.
Suppose that $\theta_\mu$ is a parameter in a branch $Q_\mu \subseteq \Qb$ [e.g., if $\theta_\mu$ is the parameter of $V^{(\ell)}_{i}$, $Q_\mu=Q^{(\ell)}_i$].
Then, we have $\partial_\mu \tr( B_i\rho B_i^\dag O_i)=0$ for $i\notin Q_\mu$ in Eq.~\eqref{eq: partial Osp} because $B_i$ does not depend on $\theta_\mu$ due to the hierarchical structure.
Hence, the derivative is written as 
\begin{align}
    \partial_\mu \braket{O}_\text{sp}
    &= \frac{1}{n}\sum_{i\in Q_\mu} \partial_\mu \tr( B_i\rho B_i^\dag O_i).
\end{align}
Using the parameter-shift rule~\cite{Mitarai2018-ap, Schuld2019-rr}, we can measure the derivative as 
\begin{align}
    \partial_\mu \braket{O}_\text{sp} 
    &= \frac{1}{n}\sum_{i\in Q_\mu} \left[ \tr( B_{i\mu+}\rho B_{i\mu+}^\dag O_i) - \tr( B_{i\mu-}\rho B_{i\mu-}^\dag O_i) \right], \label{eq: parameter-shift}
\end{align}
where $B_{i\mu\pm}$ is the subcircuit of sp-QCNN in which $\theta_\mu$ is replaced with $\theta_\mu \pm \pi/4$.
The randomized model chooses one of the subcircuits $B_i$ ($i\in Q_\mu$) every circuit execution and measures $O_i$ in the parameter-shifted circuit, thus requiring $2|Q_\mu|$ types of quantum circuits for obtaining the derivative $\partial_\mu \braket{O}_\text{RD} = \partial_\mu \braket{O}_\text{sp}$.

The splitting structure of the sp-QCNN enables us to measure the gradient more efficiently.
This high efficiency stems from two factors.
First, the sp-QCNN can measure all the terms in Eq.~\eqref{eq: parameter-shift} with only two types of quantum circuits (i.e., the $\pm \pi/4$ parameter-shifted circuits) by coherently executing $B_i$ ($i\in Q_\mu$) in parallel, which improves the measurement efficiency by a factor of $\mO(|Q_\mu|)$.
Second, the sp-QCNN can measure different derivatives $\partial_\mu \braket{O}$ and $\partial_\nu \braket{O}$ simultaneously if $\theta_\mu$ and $\theta_\nu$ belong to different branches (i.e., $Q_\mu \cap Q_\nu = \varnothing$).
This is due to the fact that, in the Heisenberg picture, the two observables $B_{i\mu\pm}^\dag O_i B_{i\mu\pm}$ and $B_{j\nu\pm}^\dag O_j B_{j\nu\pm}$ in Eq.~\eqref{eq: parameter-shift} are not overlapped and commute with each other.
Therefore, $B_{i\mu\pm}^\dag O_i B_{i\mu\pm}$ and $B_{j\nu\pm}^\dag O_j B_{j\nu\pm}$ are simultaneously measurable.
This technique allows us to simultaneously measure $s_\ell$ derivatives for $s_\ell$ branches in the $\ell$th convolutional layer, improving the gradient measurement efficiency by a factor of $\mO(s_\ell)$.
Combining these two factors, the sp-QCNN achieves the $\mO(|Q_\mu|)\times\mO(s_\ell)\sim \mO(n)$ times improvement of the gradient measurement efficiency (we have used $|Q_\mu|\sim n/s_\ell$ in the $\ell$th layer).

We note that the discussions in Secs.~\ref{sec: expectation_value} and \ref{sec: gradient} do not rely on the equivariance of sp-QCNNs.
This implies that the high measurement efficiency of sp-QCNNs stems from the splitting structure rather than the equivariance.
The equivariant sp-QCNN integrates the splitting structure with the equivariance to simultaneously achieve high trainability, generalization, and measurement efficiency.

\subsection{Absence of barren plateaus} \label{sec: trainability}

A natural question arises regarding the potential impact of the splitting structure of sp-QCNNs on the high trainability of QCNNs. Specifically, one might wonder if the circuit splitting leads to barren plateaus.
We demonstrate this is not the case: the sp-QCNNs do not suffer from barren plateaus, as in conventional QCNNs. 
For simplicity, we consider a non-equivariant sp-QCNN, denoted by $U(\bmt)$, and add single qubit rotations $R(\bma, \bmb) = \prod_{j=1}^n R_j(\alpha_j, \beta_j)$ to the end of the circuit, where $R_j(\alpha_j, \beta_j)=e^{-i\alpha_j X_j}e^{-i\beta_j Z_j}$ is a single qubit rotation gate.
Therefore, the total unitary of these circuits is given by $U_\text{tot}(\bmt,\bma,\bmb) = R(\bma,\bmb) U(\bmt)$.
Here, we show that $U_\text{tot}(\bmt,\bma,\bmb)$ does not exhibit barren plateaus under an assumption.

We consider a cost function $C(\rho)=\tr[U_\text{tot}\rho U_\text{tot}^\dag O ]$ with $O=\sum_j O_j$, where $O_j$ is a local observable acting only on the $j$th qubit.
The cost function is written as the summation of local cost functions $C(\rho) = \sum_j C_j(\rho)$ using $C_j(\rho) = \tr[ U_\text{tot}\rho U_\text{tot}^\dag O_j ] = \tr[ B_j\rho B_j^\dag O_j ]$, where $B_j$ is the subcircuit of $U_\text{tot}$ associated with the $j$th qubit.
Here, we assume that $C_j(\rho)$ does not exhibit barren plateaus:
\begin{align}
    \text{Var}[C_j(\rho)] \in \Omega\left(\frac{1}{\text{poly}(n)}\right), \label{eq: proof1}
\end{align}
where $\text{Var}[A]$ denotes a variance of $A$ in the parameter space.
This assumption is reasonable because the subcircuit $B_j$ has the same structure as a conventional QCNN, in which the absence of barren plateaus is proved~\cite{Pesah2021-py}.
Note that this assumption holds only if each convolutional layer is sufficiently shallow to avoid causing a barren plateau on its own.

By using $\text{Var}[A+B]=\text{Var}[A]+\text{Var}[B]+2\text{Cov}[A,B]$ and $\text{Cov}[A,B]=E[AB]-E[A]E[B]$, the variance of the total cost function is 
\begin{align}
    \text{Var}[C(\rho)] 
    &= \sum_j \text{Var}[C_j(\rho)] \notag \\
    &+ 2\sum_{j<k} E[C_j(\rho) C_k(\rho)] \notag \\
    &- 2\sum_{j<k} E[C_j(\rho)] E[C_k(\rho)], \label{eq: proof2}
\end{align}
where $\text{Cov}[A,B]$ is the covariance of $A$ and $B$, and $E[A]$ is the expectation value of $A$ in the parameter space.
The first term on the right-hand side is $\Omega(1/\text{poly}(n))$ due to the assumption~\eqref{eq: proof1}, but the negative correlation of $C_j$ and $C_k$ in the second and third terms may lead to the exponentially small variance of $C$ in general.
Below, we show that the second and third terms vanish in our ansatz $U_\text{tot}$, proving the absence of barren plateaus.

We first show that the third term on the right-hand side of Eq.~\eqref{eq: proof2} vanishes for $U_\text{tot}(\bmt,\bma,\bmb)=R(\bma,\bmb)U(\bmt)$.
The expectation value of $C_j$ is written as
\begin{align}
    &E[C_j(\rho)] = \frac{1}{\mathcal{N}}\int d\bmt d\bma d\bmb \, \tr \left[ \rho U_\text{tot}^\dag O_j U_\text{tot} \right], \label{eq: proof3}
\end{align}
where $\mathcal{N}$ is the normalization factor, and the integrals of $\bma$ and $\bmb$ run from $0$ to $\pi$.
We notice that
\begin{align}
    &\int_0^{\pi} d\alpha_j \int_0^{\pi} d\beta_j \, R_j^\dag(\alpha_j,\beta_j) O_j R_j(\alpha_j,\beta_j)  = 0 \label{eq: proof4}
\end{align}
for any $O_j$ expressed as a linear combination of $X_j,Y_j$ and $Z_j$.
Thereby, one can easily show that the integration in Eq.~\eqref{eq: proof3} is zero:
\begin{align}
    &E[C_j(\rho)] = 0. \label{eq: proof_E}
\end{align}
Thus, the third term in Eq.~\eqref{eq: proof2} vanishes.
Similarly, the second term in Eq.~\eqref{eq: proof2} is written as 
\begin{align}
    &E[C_j(\rho) C_k(\rho)] \notag \\
    &= \frac{1}{\mathcal{N}}\int d\bmt d\bma d\bmb \, \tr \left[ \rho U_\text{tot}^\dag O_j U_\text{tot} \right] \tr \left[ \rho U_\text{tot}^\dag O_k U_\text{tot} \right]. \label{eq: proof5}
\end{align}
In this integral, $U_\text{tot}^\dag O_k U_\text{tot} = \cdots R_j^\dag O_k R_j \cdots$ does not depend on $\alpha_j$ and $\beta_j$ because $R_j(\alpha_j,\beta_j)$ commutes with $O_k$ for $j\neq k$.
Therefore, the integration by $\alpha_j$ and $\beta_j$ acts only on $U_\text{tot}^\dag O_j U_\text{tot}$ in Eq.~\eqref{eq: proof5}.
Given Eq.~\eqref{eq: proof4}, the second term in Eq.~\eqref{eq: proof2} vanishes:
\begin{align}
    &E[C_j(\rho) C_k(\rho)] = 0. \label{eq: proof_EE}
\end{align}
Combining Eqs.~\eqref{eq: proof1}, \eqref{eq: proof2}, \eqref{eq: proof_E}, and \eqref{eq: proof_EE}, we obtain
\begin{align}
    \text{Var}[C(\rho)] \in \Omega\left(\frac{1}{\text{poly}(n)}\right),
\end{align}
implying that there are no barren plateaus in sp-QCNNs.

While we have focused on the non-equivariant case, we expect that equivariant sp-QCNNs also do not suffer from barren plateaus.
This is due to the fact that the symmetry constraint reduces the circuit expressivity, increasing the variance of the cost function in general~\cite{Larocca2022-so, Ragone2024-hl, Fontana2024-ky}.
We also remark on the necessity of $R(\bma,\bmb)$.
Although we have assumed $R(\bma,\bmb)$ for the proof, it will not be necessary for the absence of barren plateaus in most cases because the situation where the final single qubit rotations determine the existence of barren plateaus is rather peculiar.
Even if barren plateaus depend on the final single qubit rotations, this does not matter in practice because we can easily avoid barren plateaus only by adding single qubit rotations at the end of the circuit.
In Appendix~\ref{ap: variance}, we confirm these considerations by numerically demonstrating the absence of barren plateaus in an equivariant sp-QCNN without $R(\bma,\bmb)$.

\section{Systematic construction of equivariant sp-QCNN} \label{sec: construction}

This section introduces a systematic method of constructing equivariant sp-QCNNs.
Previous studies have proposed several techniques for designing equivariant QNNs, such as the twirling, the nullspace, and the Choi operator methods~\cite{Meyer2023-vx, Nguyen2024-nt}.
However, as these methods do not assume the constraint of circuit splitting, they are not straightforwardly applicable to spatially equivariant sp-QCNNs that are invariant under spatial operations permuting the qubit positions.
Thus, this work provides an alternative method of constructing equivariant sp-QCNNs, called the subgroup method, especially for spatial symmetries.

This section mainly focuses on spatial symmetry $G$, which is a subgroup of the symmetric group $S_n$.
The symmetric group $S_n$ is defined as the set of all bijective functions from $\Qb$ to $\Qb$, namely all permutations of qubits.
Since $G$ is a subgroup of $S_n$, $G$ also represents the permutations of qubits, such as rotation, inversion, and translation of the lattice on which the qubits are defined.
Based on $S_n$, the action of $g\in G$ on $\Qb$ is defined as $g(q_i)=q_j$ ($q_i,q_j\in \Qb$), and the action on a subset $Q_1\subseteq \Qb$ is also defined as $g(Q_1) = \{g(q)|q\in Q_1\}$.
In addition, the action of $G$ on $\Qb$ naturally leads to a unitary representation $U_g$ on the $n$-qubits quantum system as $U_g\ket{\sigma_1 \cdots \sigma_n} = \ket{\sigma_{g(1)}\cdots \sigma_{g(n)}}$, where $\ket{\sigma_1 \cdots \sigma_n}$ is a computational basis with $\sigma_j=0,1$.
We note that while this section only focuses on spatial symmetries, other types of symmetries (e.g., internal symmetries that do not permute the qubit positions) can be incorporated into our sp-QCNN by appropriately designing the unitary operators on each branch $V^{(\ell)}_i$ with the conventional approaches of equivariant QNNs.

For convenience, we define the following terms:
\begin{dfn}[$G$-equivalence of qubits] \label{def: G-equivalence}
    We say that qubits $q_1,q_2 \in \Qb$ are {\it $G$-equivalent} and denote $q_1\sim q_2$ if and only if there exists $g\in G$ such that $g(q_1) = q_2$.
\end{dfn}
\begin{dfn}[$G$-independence of qubits]
    We say that a subset of qubits $Q_1 \subseteq \Qb$ is {\it $G$-independent} if and only if $\forall q_1,q_2\in Q_1$ ($q_1\neq q_2$) are not $G$-equivalent. 
\end{dfn}
\begin{dfn}[$G$-completeness of qubits]
    We say that a subset of qubits $Q_1 \subseteq \Qb$ is {\it $G$-complete} if and only if, for any $q\in \Qb$, there exists $q_1\in Q_1$  that is $G$-equivalent to $q$.
\end{dfn}

\subsection{Subgroup and coset}

For preliminaries, we briefly review some notions of the group theory used in the subgroup method~\cite{Rotman2012-gi}.
Consider a finite group $G$, where a binary operation $G\times G \to G$ is defined with associativity, an identity element, and inverse elements.
A subset $H \subseteq G$ is called a subgroup of $G$ if $H$ is also a group under the same binary operation as $G$.
In this work, we denote $H\leq G$ if $H$ is a subgroup of $G$. 

Given a subgroup $H$ and an element $g\in G$, the left coset $C^H_g$ is defined as follows:
\begin{align}
    C^H_g = gH = \{gh \,|\, h\in H \}.
\end{align}
Here, $|C^H_g|=|H|$ holds for all $g\in G$.
In what follows, we will refer to left cosets as cosets for simplicity.
The definition of cosets readily leads to the fact that a symmetry operation $g\in G$ maps a coset to another one:
\begin{align}
    g (C^H_{g_1}) = g g_1 H = g_2 H = C^H_{g_2} \label{eq: cosetmap}
\end{align}
with $gg_1=g_2$.
An important property of cosets is that different cosets for $H$ either are identical or have no intersection:
\begin{align}
     C^H_{g_1} = C^H_{g_2} \,\,\,\text{or}\,\,\, C^H_{g_1} \cap C^H_{g_2} = \varnothing \label{eq: coset2}
\end{align}
with $\forall g_1,g_2\in G$.
Therefore, $G$ is decomposed into $s=|G|/|H|$ cosets as 
\begin{align}
    G = \bigsqcup_{i=1}^s C^H_i,
\end{align}
where $C^H_i$ denotes the $i$th coset of $H$ in $G$.
We will use the coset decomposition for systematically constructing equivariant sp-QCNNs.

\subsection{Subgroup method}

Here, we present the subgroup method to systematically construct equivariant sp-QCNNs.
This method involves two steps.
The first step determines the circuit splitting so that it does not break a given symmetry.
Then, as the second step, we design a symmetric unitary operator acting on each branch determined in the first step.
Below, we will describe the outline of the subgroup method.
The details are provided in Appendix~\ref{ap: subgroup method}.

\subsubsection{Circuit splitting}

\begin{figure}[t]
	\centering
	\includegraphics[width=\linewidth]{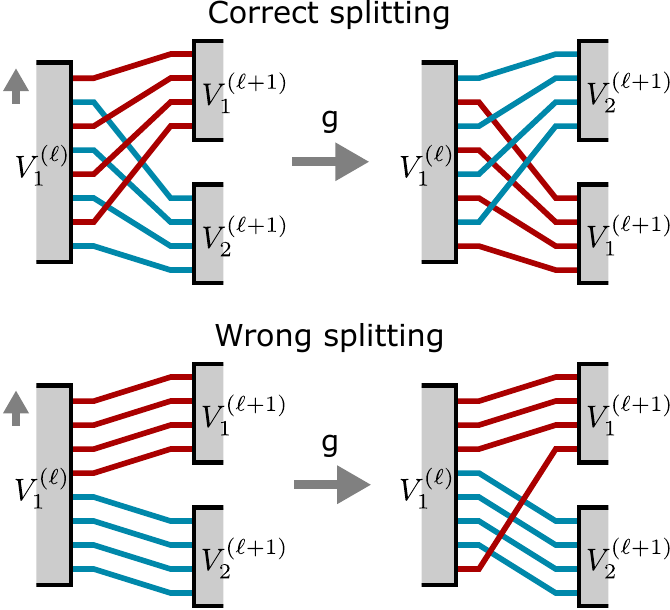}
	\caption{Correct and wrong examples of circuit splitting for translational symmetry $G=\{T^0,T^1,\cdots,T^{n-1}\}$, where $T$ is the translational operation acting as $T(q_i)=q_{i-1}$  ($q_i$ is the $i$th qubit).
The $G$-invariance holds in the correct example: $T(Q_1)=Q_2$ with $Q_1=\{q_1,q_3,q_5,q_7\}$ and $Q_2=\{q_2,q_4,q_6,q_8\}$.
In contrast, the $G$-invariance is broken in the wrong example: $T(Q_1)\neq Q_2$ with $Q_1=\{q_1,q_2,q_3,q_4\}$ and $Q_2=\{q_5,q_6,q_7,q_8\}$.
	}
	\label{fig: circuit splitting}
\end{figure}

\begin{figure*}[t]
	\centering
	\includegraphics[width=\linewidth]{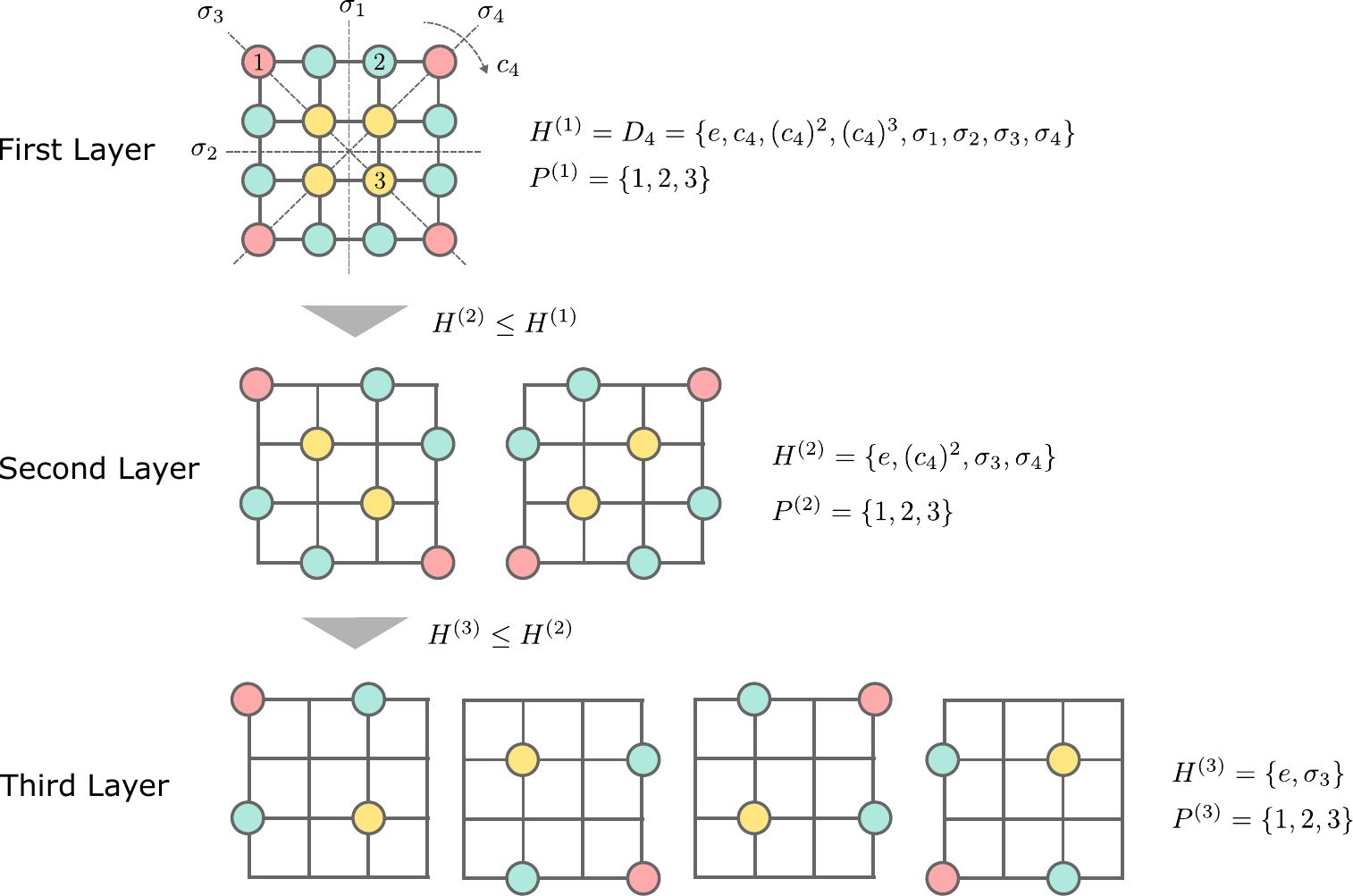}
	\caption{
	   An example of circuit splitting for $D_4=\{e, c_4, (c_4)^2, (c_4)^3, \sigma_1, \sigma_2, \sigma_3, \sigma_4 \}$ symmetry ($e$ is an identity operator, $c_4$ is a rotation by $\pi/2$, and $\sigma_i$'s are inversion operations around each axis).
       Each circle represents a qubit, and qubits with the same color are $G$-equivalent (see Definition~\ref{def: G-equivalence}).
        We can construct the circuit splitting from a subgroup $H^{(\ell)}$ and a qubit subset $P^{(\ell)}$ using the subgroup method.
	}
	\label{fig:pic1}
\end{figure*}

The first step is appropriately splitting the circuit such that $\forall g\in G$ does not change the splitting structure, i.e. the branches $\mQ^{(\ell)}=\{Q_i^{(\ell)}\}_{i=1}^{s_\ell}$.
With the aforementioned Eqs.~\eqref{eq: Qdecomp} and \eqref{eq:cc}, there are three requirements for the $G$-equivariant circuit splitting as follows:
\begin{enumerate}
    \item $G$-invariance of circuit splitting:
    \begin{align}
        \text{$g(\mQ^{(\ell)})=\mQ^{(\ell)}$ for $\forall g\in G$}. \label{eq: G-invariant splitting}
    \end{align}
    \item Branches do not merge:
    \begin{align}
        \text{$\forall i, \exists j$ s.t. $Q_i^{(\ell+1)} \subseteq Q_j^{(\ell)}$}. \label{eq: cond2}
    \end{align}
    \item Branches are a partition of qubits:
    \begin{align}
        &\text{$Q_i^{(\ell)} \cap Q_j^{(\ell)} = \varnothing$ for $i\neq j$}, \label{eq: cond3} \\
        &\text{$\bigcup_i Q_i^{(\ell)} = \Qb$}. \label{eq: cond4}
    \end{align}
\end{enumerate}
In particular, Eq.~\eqref{eq: G-invariant splitting} means that any symmetry operation can permute the branches but never modify the entire splitting structure, as illustrated in Fig.~\ref{fig: circuit splitting}, which is essential for the equivariance.

The subgroup method gives a systematic way of circuit splitting to satisfy Eqs.~\eqref{eq: G-invariant splitting}--\eqref{eq: cond4}.
The key idea is as follows.
We consider a set of subgroups 
\begin{align}
\mc{H}^{(\ell)}=\{H_\lambda^{(\ell)}\}_{\lambda=1}^{\Lambda_\ell}
\end{align}
and a set of qubit subsets
\begin{align}
\mc{P}^{(\ell)}=\{P_\lambda^{(\ell)}\}_{\lambda=1}^{\Lambda_\ell},
\end{align}
where $H_\lambda^{(\ell)}$ and $P_\lambda^{(\ell)}$ are a subgroup of $G$ and a subset of $\Qb$, respectively.
An integer $\Lambda_\ell$ denotes the number of subgroups and subsets for the $\ell$th layer.
From these $\mc{H}^{(\ell)}$ and $\mc{P}^{(\ell)}$, we define the following branch:
\begin{align}
Q_{\lambda,i}^{(\ell)}
&=C^{H_\lambda^{(\ell)}}_i(P_\lambda^{(\ell)}) \notag \\
&=\left\{g(q) \,\middle|\, g\in C^{H_\lambda^{(\ell)}}_i, q\in P_\lambda^{(\ell)} \right\}, \label{eq: Qlambdai}
\end{align}
where $C^{H_\lambda^{(\ell)}}_i$ is the $i$th coset of $H_\lambda^{(\ell)}$ in $G$.
Remarkably, $\mQ^{(\ell)}=\{Q_{\lambda,i}\}_{\lambda,i}$ constructed in this way is $G$-invariant, i.e., it satisfies Eq.~\eqref{eq: G-invariant splitting}.
In fact, one can verify that 
\begin{align}
g(Q_{\lambda,i}^{(\ell)})
&=gC^{H_\lambda^{(\ell)}}_i(P_\lambda^{(\ell)}) 
= C^{H_\lambda^{(\ell)}}_{j} (P_\lambda^{(\ell)})
=Q_{\lambda,j}^{(\ell)},
\end{align}
where we have used Eq.~\eqref{eq: cosetmap} that the symmetry operation $g\in G$ maps a coset to another one.

Based on this idea, we can find $\mc{H}^{(\ell)}$ and $\mc{P}^{(\ell)}$ to satisfy the requirements of Eqs.~\eqref{eq: cond2}--\eqref{eq: cond4} by choosing them layer by layer.
To meet Eq.~\eqref{eq: cond2}, we determine $\mc{H}^{(\ell)}$ and $\mc{P}^{(\ell)}$ as follows:
for $\forall \lambda \in [\Lambda_{\ell+1}]$, there exists $\lambda'  \in [\Lambda_{\ell}]$ such that 
\begin{align}
H_\lambda^{(\ell+1)}\leq H_{\lambda'}^{(\ell)} \quad \text{and} \quad P_\lambda^{(\ell+1)}\subseteq P_{\lambda'}^{(\ell)}. \label{eq: layer by layer}
\end{align}
As proven in Appendix~\ref{ap: cond2}, this condition ensures the requirement of Eq.~\eqref{eq: cond2}.
Moreover, the following conditions are sufficient for the requirements of Eqs.~\eqref{eq: cond3} and \eqref{eq: cond4}:
\begin{align}
    &\text{1. $|G(q)|/|H_\lambda^{(\ell)}(q)|=|G|/|H_\lambda^{(\ell)}|$ for $\forall q\in P_\lambda^{(\ell)}$}, \\
    &\text{2. $P^{(\ell)}$ is $G$-independent}, \\
    &\text{3. $P^{(\ell)}$ is $G$-complete}, \label{eq: last cond}
\end{align}
where we have defined $P^{(\ell)}=\bigsqcup_\lambda P_\lambda^{(\ell)}$. 
The details are provided in Appendix~\ref{ap: cond3}.
To summarize, satisfying all requirements of Eqs.~\eqref{eq: G-invariant splitting}--\eqref{eq: cond4} necessitates careful selection of $\mH^{(\ell)}$ and $\mP^{(\ell)}$ to fulfill the conditions of Eqs.~\eqref{eq: layer by layer}--\eqref{eq: last cond}. 
In Appendix~\ref{ap: brute}, we present a systematic method with proof to determine $\mc{H}^{(\ell)}$ and $\mc{P}^{(\ell)}$ that satisfy these conditions. 
This method proceeds layer by layer from $\ell=L$ to $\ell=1$, ensuring Eqs.~\eqref{eq: layer by layer}--\eqref{eq: last cond} and thus Eqs.~\eqref{eq: G-invariant splitting}--\eqref{eq: cond4} by enumerating all subgroups of $G$.
The computational cost of this method is polynomial in the number of qubits $n$ and, therefore, feasible unless $G$ is too large to be tractable.

As an example of circuit splitting, let us consider a system of 16 qubits defined on a $4\times 4$ square lattice.
This lattice has the square lattice symmetry characterized by $D_4 = \{e,c_4,c_4^2,c_4^3, \sigma_1, \sigma_2,\sigma_3,\sigma_4\}$, where $e$ is an identity operation, $c_4$ is a rotation operation by $\pi/2$, and $\sigma_i$ is an inversion operation for each axis as shown in Fig.~\ref{fig:pic1}.
In the first layer, setting a subgroup $H^{(1)}=D_4$ and a qubit subset $P^{(1)}=\{1, 2, 3\}$ with $\Lambda_1=1$, we have a branch of the first layer as $Q^{(1)}_1=\{1,2,\cdots,16\}$.
In the second layer, considering a subset $H^{(2)}=\{e,c_4^2,\sigma_3,\sigma_4\} \leq H^{(1)}$ and the same qubit subset $P^{(2)}=\{1, 2,3\}$, we obtain the branches of the second layer  $Q_1^{(2)}$ and $Q_2^{(2)}$, as shown in Fig.~\ref{fig:pic1}.
Even in subsequent layers, we can construct the circuit splitting by considering smaller subgroups and qubit subsets sequentially.
Note that the circuit splitting is not unique: a different choice of $\{H_\lambda^{(\ell)}\}$ and $\{P_\lambda^{(\ell)}\}$ results in different circuit splitting.
In particular, although this example only considers the case of $\Lambda=1$ for simplicity, the circuit splitting with $\Lambda >1$ is allowed.

\begin{figure*}[t]
	\centering
	\includegraphics[width=\linewidth]{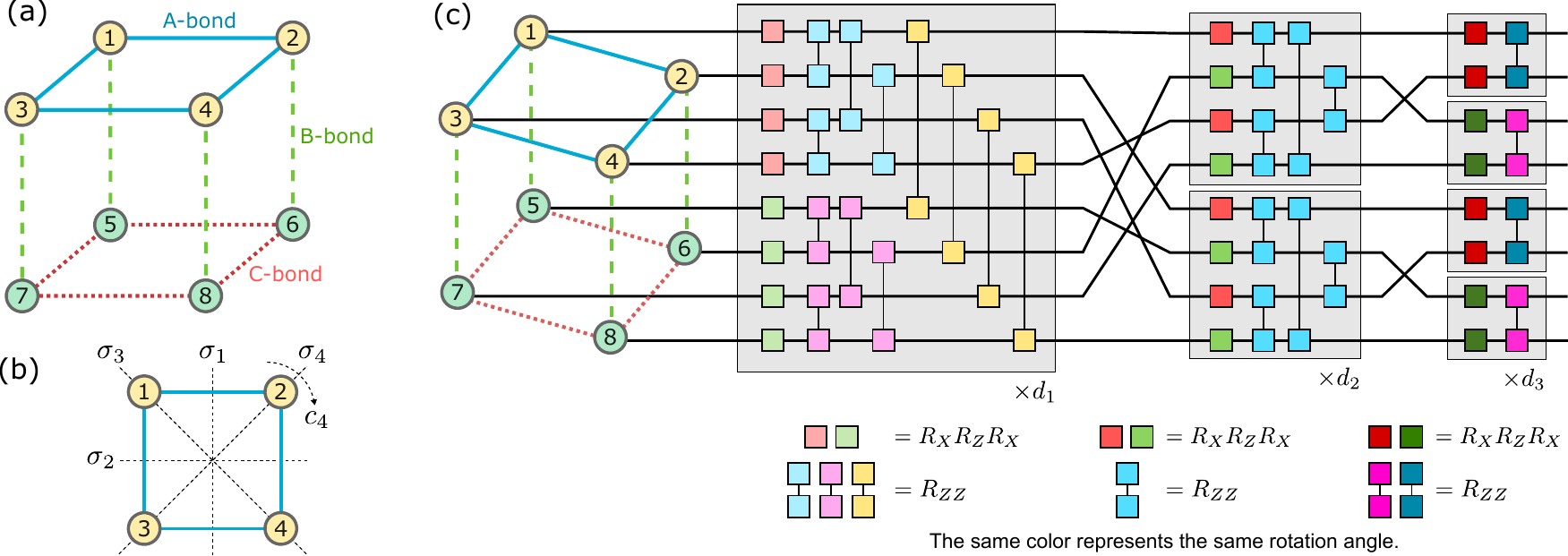}
	\caption{
(a) $2\times2\times2$ cubic lattice that consists of three types of bonds: $A$- (blue), $B$- (green), and $C$- (red) bonds.
(b) This lattice is invariant under the action of $D_4=\{e,c_4,c_4^2, c_4^3,\sigma_1,\sigma_2,\sigma_3,\sigma_4 \}$.
(c) The circuit structure of $D_4$-equivariant sp-QCNN used in this work.
The quantum gates with the same color in each layer share the same rotation angle (e.g., the rotation angles of the four yellow $ZZ$ rotation gates are common).
The details are provided in Appendix~\ref{ap: unitary ansatz}.
	}
	\label{fig: demo_lattice}
\end{figure*}

\subsubsection{Unitary operators}

After determining the circuit splitting at the $\ell$th layer, we need to design the unitary operator acting on each branch such that Eq.~\eqref{eq:Vell_sym} is satisfied.
However, the conventional methods, such as the twirling and nullspace methods, cannot be straightforwardly applied to design the unitary of the equivariant sp-QCNN because its splitting structure acts as a constraint.
In other words, the conventional methods consider a single unitary $V$ satisfying $U_g V U_g^\dag = V$ and do not assume the splitting structure as $V=\prod_{\lambda,i} V_{\lambda,i}$ ($V_{\lambda,i}$'s are unitaries acting on different subsets of qubits).
Below, we omit the layer index because the unitary operator on each layer can be determined independently.

Our approach to finding unitaries that satisfy Eq.~\eqref{eq:Vell_sym} is to reduce $U_g (\prod_{\lambda,i} V_{\lambda,i}) U_g^\dag = \prod_{\lambda,i} V_{\lambda,i}$ for $g\in G$ to the form of $U_h V_{\lambda,i} U_h^\dag = V_{\lambda,i}$ for $h\in H$ ($H$ is a subgroup of $G$), which can be treated with the conventional methods.
Suppose that $\mc{Q}=\{Q_{\lambda,i}\}$ constructed from $\mc{H}=\{H_\lambda\}$ and $\mc{P}=\{P_\lambda\}$ by the subgroup method is given for the $\ell$th circuit splitting.
Let the first coset of $H_\lambda$ be $C^{H_\lambda}_1=H_\lambda$.
In this method, given that the branch $Q_{\lambda,1}=H_\lambda (P_\lambda)$ is invariant under the action of $h\in H_\lambda$ [i.e., $hH_\lambda(P_\lambda) = H_\lambda(P_\lambda)$], we first design $V_{\lambda,1}$ such that
\begin{align}
    &U_h V_{\lambda,1} U_h^\dag = V_{\lambda,1} \label{eq: Vlambda1}
\end{align}
for $\forall h\in H_\lambda$.
Because this condition involves only $V_{\lambda,1}$, the conventional methods, such as the twirling method, can be applied to construct $V_{\lambda,1}$.
Once $V_{\lambda,1}$ is constructed using such a method, the other unitaries $V_{\lambda,i}$ ($i\neq1$) are determined from $V_{\lambda,1}$ through
\begin{align}
    &U_{g_i} V_{\lambda,1} U_{g_i}^\dag = V_{\lambda,i}, \label{eq: Vlambda2}
\end{align}
where $g_i$ is one of the elements of $C^{H_\lambda}_i$.
Then, the unitary operators $V_{\lambda,i}$ constructed in this way satisfy the following relation (see the next paragraph for the proof):
\begin{align}
    &U_g \left(\prod_{i=1}^{s_\lambda} V_{\lambda,i}\right) U_g^\dag = \prod_{i=1}^{s_\lambda} V_{\lambda,i}, \label{eq: Vlambdai}
\end{align}
for $\forall g\in G$ with $s_\lambda=|G|/|H_\lambda|$.
By designing $\prod_{i=1}^{s_\lambda} V_{\lambda,i}$ for each $\lambda$ in this way, we can construct the $G$-equivariant convolutional layer $V$ as
\begin{align}
    V = \prod_{\lambda=1}^{\Lambda} \prod_{i=1}^{s_\lambda} V_{\lambda,i},
\end{align}
which satisfies
\begin{align}
    U_g V U_g^\dag = V
\end{align}
for $\forall g\in G$.

We prove Eq.~\eqref{eq: Vlambdai} by showing that (i) $U_g$ maps $V_{\lambda,i}$ to another $V_{\lambda,j}$ and that (ii) the map $U_g: \{V_{\lambda,i}\}_i \to \{V_{\lambda,i}\}_i$ is a bijection.
The statement (i) is proved as
\begin{align}
U_g V_{\lambda,i} U_g^\dag 
&= U_{g} U_{g_i} V_{\lambda,1} U_{g_i}^\dag U_{g}^\dag \notag \\ 
&= U_{g g_i} V_{\lambda,1} U_{g g_i}^\dag \notag \\ 
&= U_{g_j h} V_{\lambda,1} U_{g_j h}^\dag \notag \\ 
&= U_{g_j}U_{h} V_{\lambda,1} U_{h}^\dag U_{g_j}^\dag \notag \\ 
&= U_{g_j} V_{\lambda,1} U_{g_j}^\dag \notag \\ 
&= V_{\lambda,j},
\end{align}
where we have used Eqs.~\eqref{eq: Vlambda1}--\eqref{eq: Vlambda2} and $U_{g_1}U_{g_2}=U_{g_1g_2}$ for $\forall g_1,g_2\in G$ (because $U_g$ is the representation of $G$) and defined $g g_i = g_j h$ with $\exists h\in H_\lambda$ and $\exists g_j \in C_j^{H_\lambda}$.
In order to prove the statement (ii), it suffices to show that $U_g: \{V_{\lambda,i}\}_i \to \{V_{\lambda,i}\}_i$ is an injection because $\{V_{\lambda,i}\}_i$ is a finite set.
Here, we prove that by contradiction.
Assume that $U_g V_{\lambda,i} U_g^\dag = U_g V_{\lambda,j} U_g^\dag$ for $i\neq j$.
Then, the assumption readily leads to $V_{\lambda,i}=V_{\lambda,j}$, which contradicts the fact that $V_{\lambda,i}$ and $V_{\lambda,j}$ act on different branches $Q_{\lambda,i}$ and $Q_{\lambda,j}$.
Thus, $U_g$ is a bijection.
These results mean that $U_g$ just permutes $V_{\lambda,i}$'s, proving Eq.~\eqref{eq: Vlambdai}, where we have used $[V_{\lambda,i},V_{\lambda',i'}]=0$.

We show how to construct unitary operators in Fig.~\ref{fig:pic1} for example.
Since the first layer has only one branch, we construct the unitary operator acting on it to satisfy $U_h V_1^{(1)} U_h^\dag = V_1^{(1)}$ for $\forall h\in H^{(1)}$, which is feasible using the conventional method.
In the second layer, we first design $V_1^{(2)}$ acting on $Q_1^{(2)}$ [the left branch in Fig.~\ref{fig:pic1}] such that $U_h V_1^{(2)} U_h^\dag = V_1^{(2)}$ for $\forall h\in H^{(2)}$.
Then, we determine $V_2^{(2)}$ as $U_{g_2} V_1^{(2)} U_{g_2}^\dag = V_2^{(2)}$ with $g_2\in C_2^{H^{(2)}}$ (say $g_2=c_4$).
Similarly, we can construct the unitary operators after the third layer.

\section{Demonstration: noisy quantum data classification} \label{sec: demo}

In this section, we numerically demonstrate the high measurement efficiency and generalization of equivariant sp-QCNNs in a specific classification task of noisy quantum data.

\subsection{Classification of Heisenberg model ground states}

The equivariant sp-QCNN is suitable for solving problems associated with symmetry.
Such problems are often encountered in quantum chemistry, physics, and machine learning, which are the main targets of quantum computing.
In practice, when we know the symmetry of quantum states, dynamics, and data distribution of interest in advance, we can exploit the symmetry to improve the accuracy and efficiency of quantum algorithms.
Meanwhile, the QCNN excels at capturing correlations of quantum data at various length scales due to its hierarchical structure, which is relevant for quantum many-body systems~\cite{Cong2019-ov}.
Combining these advantages with the splitting structure, the equivariant sp-QCNN improves trainability, generalization, and measurement efficiency compared to conventional QCNNs in various tasks.

As a demonstration, we numerically investigate the performance of equivariant sp-QCNNs in the classification task of noisy ground states of Heisenberg models.
These models are concise but can describe diverse, intriguing phenomena, such as magnetic orders, topological orders, and spin liquids; thus, they are suitable for benchmarking~\cite{Wu2024-iy}.
Specifically, we consider spin models on a $2\times2\times2$ cubic lattice, where spins (or qubits) and their Pauli operators $X_j, Y_j$, and $Z_j$ are defined on each lattice site.
This lattice consists of three types of bonds: $A$-, $B$-, and $C$-bonds as shown in Fig.~\ref{fig: demo_lattice} (a).
Here, we consider two types of Heisenberg models, $H_1$ and $H_2$, as 
\begin{align}
    &H_\mu = \sum_{\braket{j,k}} J_{jk}^\mu (X_jX_k + Y_jY_k + Z_jZ_k)
\end{align}
with $\mu=1,2$.
The summation of $\braket{j,k}$ runs over all pairs of nearest neighbor sites $j$ and $k$.
Here, the strength of exchange interaction $J_{jk}^\mu$ depends on the bond types as
\begin{align}
&J_{jk}^\mu = 
\begin{cases}
J_A & \braket{j,k}\in \text{$A$-bond} \\ 
\pm J_B & \braket{j,k}\in \text{$B$-bond} \\ 
J_C & \braket{j,k}\in \text{$C$-bond},
\end{cases} \label{eq: bond}
\end{align}
where $\pm$ corresponds to $\mu=1$ and $2$, respectively.
Let $\ket{\psi_\mu}$ be the ground state of $H_\mu$.
The task here is to classify these two ground states $\ket{\psi_1}$ and $\ket{\psi_2}$.

We assume $\ket{\psi_\mu}$ is disturbed by local noise as 
\begin{align}
    \ket{\psi_\mu(\bmn,\bme)}=\prod_{j=1}^n e^{i\epsilon_j (n^x_j X_j + n^y_j Y_j + n^z_j Z_j)} \ket{\psi_\mu},
\end{align}
where $(n_j^x,n_j^y,n_j^z)$ is a unit vector uniformly sampled from the two-dimensional unit sphere, and $\epsilon_j$ is the rotation angle around $(n_j^x,n_j^y,n_j^z)$ sampled from the normal distribution $\propto\exp(-\epsilon_j^2/2\sigma^2)$ with $\sigma=\gamma\pi/2$ ($\gamma$ is the noise level).
These noisy data cannot be distinguished using any single qubit observable because its expectation value for $\ket{\psi_\mu(\bmn,\bme)}$ is zero due to the spin $SU(2)$ symmetry.
Let $\mathcal{D}_\mu$ be the data distribution for $\ket{\psi_\mu(\bmn,\bme)}$.

In this demonstration, we optimize several machine learning models with training data to classify these noisy quantum states and investigate their classification performances.
Here, $2N_t$ training data are given by
\begin{align}
    \{\ket{\phi_k},y_k\}_{k=1}^{2N_t} 
    =&\{ \ket{\psi_1(\bmn_k,\bme_k)}, y_k=1 \}_{k=1}^{N_t} \notag \\
    &\sqcup\{ \ket{\psi_2(\bmn_k,\bme_k)}, y_k=0 \}_{k=N_t+1}^{2N_t},
\end{align}
where $\ket{\psi_\mu(\bmn_k,\bme_k)}$ is sampled from $\mathcal{D}_\mu$ and $y_k$ is the corresponding label ($y_k=1$ for $\mc{D}_1$ and $y_k=0$ for $\mc{D}_2$).
In the main text, we set $J_A=1.0, J_B=1.5, J_C=1.3$, and $\gamma=0.4$ (see Appendix~\ref{ap: noise resilience} for results with different noise levels $\gamma$).

\begin{figure*}[t]
	\centering
	\includegraphics[width=\linewidth]{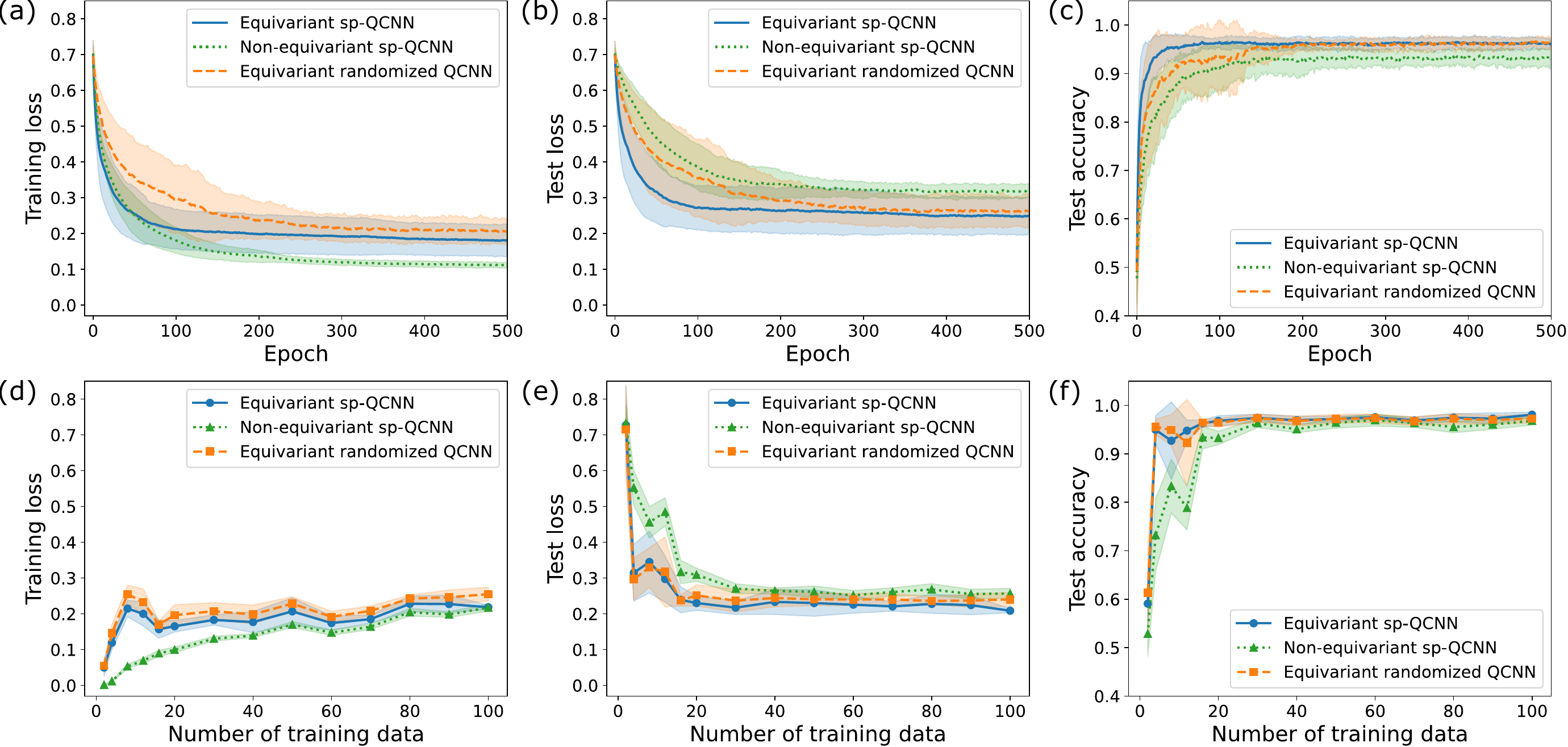}
	\caption{
(a)--(c) Changes in training loss, test loss, and test accuracy during training.
The solid, dotted, and dashed lines denote the results for equivariant sp-QCNN, non-equivariant sp-QCNN, and equivariant randomized QCNN, respectively. 
The shaded areas indicate the standard deviation for the twenty sets of random initial parameters.  
The number of training data is $2N_t=20$.
(d)--(f)  Training loss, test loss, and test accuracy after sufficiently long training processes for various numbers of training data.
The circles, triangles, and squares denote the results for equivariant sp-QCNN, non-equivariant sp-QCNN, and equivariant randomized QCNN, respectively. 
The shaded areas indicate the standard deviation for the twenty sets of random initial parameters.  
	}
	\label{fig: training}
\end{figure*}

\subsection{Machine learning models}

We first clarify the symmetry of the data distribution.
From the bond structure of Eq.~\eqref{eq: bond}, the Hamiltonian $H_\mu$ has the square lattice symmetry characterized by group $D_4$:
\begin{align}
&U_g H_\mu U_g^\dag = H_\mu \quad \forall g\in D_4, \\
&D_4 = \{e,c_4,c_4^2,c_4^3, \sigma_1, \sigma_2,\sigma_3,\sigma_4\},
\end{align}
where $e$ is an identity operation, $c_4$ is a rotation operation by $\pi/2$, and $\sigma_i$ is an inversion operation against each axis.
Owing to this symmetry, if the ground state $\ket{\psi_\mu}$ is not degenerate (i.e., it belongs to a one-dimensional irreducible representation of $D_4$), $\ket{\psi_\mu}$ obeys the following $D_4$-symmetry: 
\begin{align}
    U_g\ket{\psi_\mu} = c_{g\mu} \ket{\psi_\mu} \quad \forall g\in D_4,
\end{align}
where $c_{g\mu}\in\mathbb{C}$ is a phase factor.
This symmetry leads to $U_g\ket{\psi_\mu(\bmn,\bme)}=c_{g\mu}\ket{\psi_\mu(\bmn',\bme')}$, where $(n')_j^\mu=n_{g(j)}^{\mu}$ and $(\epsilon')_j=\epsilon_{g(j)}$.
Since the rotation axes $\bmn$ and angles $\bme$ are sampled randomly and independently, the sampling probabilities of $\ket{\psi_\mu(\bmn,\bme)}$ and $U_g\ket{\psi_\mu(\bmn,\bme)}$ are equivalent:
\begin{align}
    \text{Prob}[\ket{\psi_\mu(\bmn,\bme)}\sim \mathcal{D}_\mu] = \text{Prob}[U_g\ket{\psi_\mu(\bmn,\bme)}\sim \mathcal{D}_\mu] \label{eq: prob equivalence}
\end{align}
for $\forall \bmn\in \mathbb{R}^{3n}, \forall \bme\in \mathbb{R}^n, \forall g\in D_4$.
In other words, given a quantum data $\ket{\psi}$ sampled from $\mc{D}_1$ or $\mc{D}_2$, the probability for $\ket{\psi}$ to have been sampled from $\mc{D}_1$ ($\mc{D}_2)$ is equal to that for $U_g\ket{\psi}$ to have been sampled from $\mc{D}_1$ ($\mc{D}_2)$.
Therefore, machine learning models should be designed to satisfy this symmetry condition of Eq.~\eqref{eq: prob equivalence}.

We consider three types of QCNNs: equivariant and non-equivariant sp-QCNNs and an equivariant randomized QCNN.
Figure~\ref{fig: demo_lattice} (c) depicts the unitary circuit of the equivariant sp-QCNN, which consists of local $R_XR_ZR_X$ and $R_{ZZ}$ rotations (this circuit can be constructed using the subgroup method).
In the equivariant sp-QCNN, some rotation gates share the parameter values with other rotation gates to ensure $D_4$-equivariance.
The non-equivariant sp-QCNN used in this work has the same circuit structure as the equivariant sp-QCNN, but the parameter sharing is not imposed.
In the randomized QCNN, one of the eight subcircuits in the equivariant sp-QCNN (corresponding to an output qubit) is randomly chosen for every measurement shot.
The unitary circuits $U(\bmt)$ of equivariant sp-QCNN and randomized QCNN satisfy $D_4$-symmetry as $[U(\bmt),U_g]=0$ for $\forall g \in D_4$.
The detailed description of the unitary circuit is provided in Appendix~\ref{ap: unitary ansatz}.
In this work, we set the depth of each convolutional layer as $d_1=d_2=d_3=3$, where the total number of independent parameters is $9d_1+7d_2+8d_3=72$ for the equivariant sp-QCNN and randomized model and $36d_1+32d_2+28d_3=288$ for the non-equivariant sp-QCNN.

We employ logistic regression to classify the noisy data.
To this end, after applying $U(\bmt)$ to an input state $\ket{\phi}$, we measure the expectation value of a $D_4$-symmetric observable $O=\sum_j X_j$ ($[O,U_g]=0$ $\forall g\in D_4$), defining the probabilities that $\rho$ belongs to $\mathcal{D}_1$ and $\mathcal{D}_2$ as
\begin{align}
    &p_1(\bmt,\ket{\phi}) = \frac{1}{1 + \exp\left[-\braket{O}\right]}, \\
    &p_2(\bmt,\ket{\phi}) = \frac{1}{1 + \exp\left[\braket{O}\right]},
\end{align}
with $\braket{O}=\braket{\phi | U^\dag(\bmt) O U(\bmt) | \phi}$.
Note that $p_1(\bmt,\ket{\phi})$ and $p_2(\bmt,\ket{\phi})$ satisfy the positivity and the conservation of probabilities: $p_1(\bmt,\ket{\phi}), p_2(\bmt,\ket{\phi})\geq0$ and $p_1(\bmt,\ket{\phi})+p_2(\bmt,\ket{\phi})=1$.
In the equivariant sp-QCNN and randomized QCNN, the $D_4$-symmetry of circuit and observable ensures 
\begin{align}
    p_1(\bmt, \ket{\phi}) = p_1(\bmt, U_g\ket{\phi}), \\
    p_2(\bmt, \ket{\phi}) = p_2(\bmt, U_g\ket{\phi}),
\end{align}
which can improve the trainability and generalization in the classification task.

To train these QCNN models, we consider the following cross entropy as a loss function:
\begin{align}
    L(\bmt) 
    = -\frac{1}{2N_t} \sum_{k=1}^{2N_t} \big[ &y_k \log p_1(\bmt,\ket{\phi_k}) \notag \\
    &+ (1-y_k) \log p_2(\bmt,\ket{\phi_k}) \big].
\end{align}
We optimize this loss function using the Adam algorithm~\cite{Kingma2014-db} with the parameter-shift rule.
The hyper-parameter values used in this work are initial learning rate $=10^{-3}$, $\beta_1=0.9$, $\beta_2=0.999$, and $\eta=10^{-8}$.
We also employ the stochastic gradient descent~\cite{Robbins1951-ql}, where only one training data is used to measure the gradient at each iteration.

In this numerical demonstration, we assume that only a few measurement shots are available for the gradient estimation due to limited computational resources. 
Although the equivariant and non-equivariant sp-QCNNs can simultaneously measure the gradient components for different branches as discussed in Sec.~\ref{sec: gradient}, we match the total number of measurement shots per iteration for the three models to evaluate the measurement efficiency fairly.
The total number of shots for measuring the gradient with the parameter-shift rule is $(2N_\text{gate}+1)N_\text{shot}$, where $N_\text{gate}$ and $N_\text{shot}$ are the number of rotation gates in the circuit and the number of shots per circuit, respectively.
We set $N_\text{gate}=144$ and $N_\text{shot}=5$.
The numerical simulations in this work use Qulacs, an open-source quantum circuit simulator~\cite{Suzuki2021-uh}.

\begin{figure}[t]
	\centering
	\includegraphics[width=0.8\linewidth]{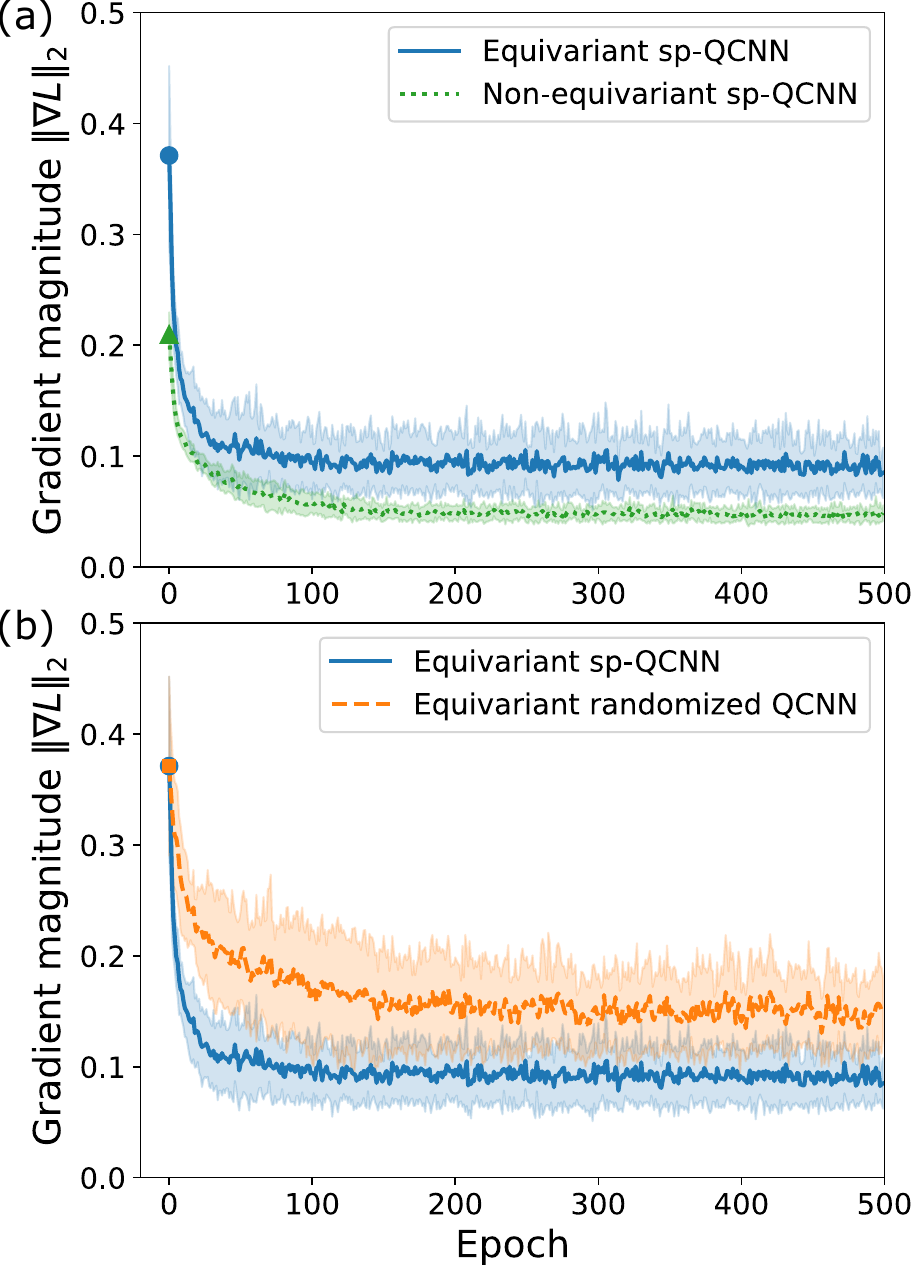}
	\caption{
Gradient magnitudes during training for (a) equivariant and non-equivariant sp-QCNNs and (b) equivariant sp-QCNN and randomized QCNN.
The solid, dotted, and dashed lines represent the mean of results for the twenty sets of random initial parameters, while the shaded areas indicate the standard deviation.  
The circle, triangle, and square denote the initial values of gradient magnitudes for each model.
The training setup is the same as in Figs.~\ref{fig: training} (a)--(c), and the gradient magnitude is calculated for $2N_t=20$ training data.
	}
	\label{fig: grad_mag}
\end{figure}

\begin{figure}[t]
	\centering
	\includegraphics[width=0.8\linewidth]{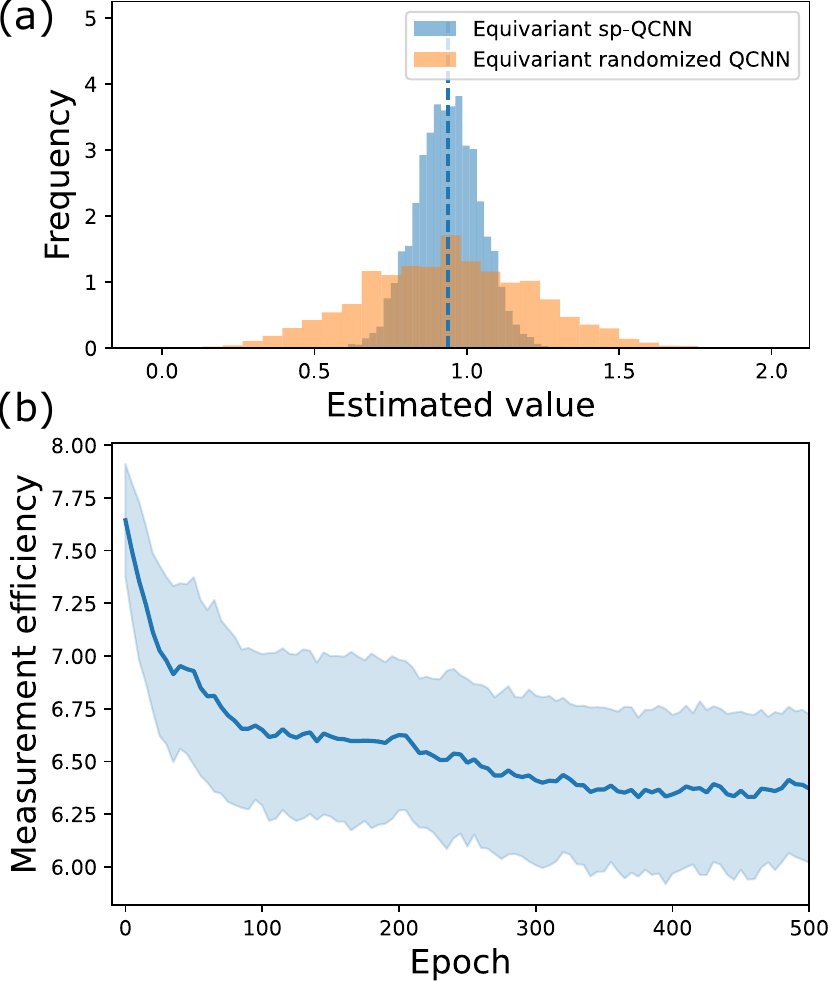}
	\caption{
(a) Histograms for the estimated expectation value of $O=\sum_j X_j$ in the equivariant sp-QCNN and the randomized model at 500 epoch.
The dashed line is the exact expectation value.
We estimate the expectation value with 100 measurement shots, repeating this $T=10,000$ times.
Then, letting $o_1,o_2,\cdots,o_{T}$ be the estimated expectation value for each trial, the histogram shows the frequency of estimated values in $o_1,o_2,\cdots,o_{T}$.
(b) Change in the measurement efficiency of the equivariant sp-QCNN compared to the randomized model during training.
The solid line and shaded area denote the mean and standard deviation for 20 random initial parameter sets.
The measurement efficiency is defined as the ratio between the variances of the estimated expectation values.
	}
	\label{fig: result_eff}
\end{figure}

\subsection{Results}

The first advantage of the equivariant sp-QCNN is the high generalization performance coming from the equivariance.
Figures~\ref{fig: training} (a)--(c) show the changes in the training and test losses and the test accuracy during training.
While the non-equivariant sp-QCNN (green dotted lines) achieves the lowest training loss among the three models, its test loss and accuracy are worse than those of the other models.
This indicates that the non-equivariant sp-QCNN is overfitting for the training dataset due to its excessive expressivity.
In contrast, the equivariant sp-QCNN and randomized QCNN show better test loss and accuracy than the non-equivariant sp-QCNN in Figs.~\ref{fig: training} (b) and (c).
The high generalization of the equivariant models can also be observed in Figs.~\ref{fig: training} (d)--(f), which show the losses and the accuracy after sufficiently long training processes for various numbers of training data.
These results verify that the equivariant models can achieve high test accuracy (or low test loss) with fewer training data than the non-equivariant sp-QCNN. 
For example, both the equivariant models can achieve $90\%$ test accuracy with only 4 training data, whereas the non-equivariant one requires 16 training data for the same accuracy.
These results suggest the significance of equivariance, especially when the number of training data is limited.

In addition, the equivariance can enhance the gradient magnitude of sp-QCNN.
Figure~\ref{fig: grad_mag} (a) shows the changes in the gradient magnitudes of the training loss for equivariant and non-equivariant sp-QCNNs.
We observe that the gradient magnitude of the equivariant sp-QCNN is larger than that of the non-equivariant one, even beyond the beginning of the training process, where the circuit is randomly initialized.
This indicates that the equivariance restricts the expressivity of the quantum circuit, resulting in a larger scale of gradient magnitudes over the loss landscape~\cite{Ragone2024-hl, Fontana2024-ky}.
This enhancement mitigates the vanishing gradient issue, potentially facilitating a smoother and more efficient training process for large-scale problems (see also Appendix~\ref{ap: variance} for the enhanced variance in the cost function due to equivariance).

Another advantage of the equivariant sp-QCNN is the high measurement efficiency stemming from its splitting structure.
In Figs.~\ref{fig: training} (a)--(c), the equivariant sp-QCNN (blue solid lines) displays faster convergences of training loss, test loss, and test accuracy than the randomized QCNN (orange dashed lines).
Since the number of measurement shots used for each epoch is consistent across all the models, this result indicates that our model can be trained with fewer measurement shots.
Furthermore, in Fig.~\ref{fig: grad_mag} (b), the gradient magnitude of the equivariant sp-QCNN converges to a smaller value more quickly compared to the randomized model.
These results support the theoretical analysis for the high measurement efficiency of the equivariant sp-QCNNs.
In the randomized model, the large statistical errors in gradient estimation hinder stable and fast optimization, slowing down the training process (and resulting in the convergence to a larger gradient magnitude).
Conversely, the high measurement efficiency of the equivariant sp-QCNN suppresses the statistical errors to stabilize the optimization and accelerate the training process.

We quantify the measurement efficiency of the equivariant sp-QCNN.
In quantum computing, statistical errors in estimating expectation values are inevitable due to finite measurement resources.
The equivariant sp-QCNN can suppress the statistical errors compared to the randomized model because the splitting structure enables us to obtain $n$ times more measurement outcomes.
In general, the variance of the estimated expectation value decays as $\mO(1/N_\text{shot})$ according to the central limit theorem. 
Therefore, we here define the relative measurement efficiency as $r=v_\text{rand}/v_\text{sp}$, where $v_\text{sp}$ and $v_\text{rand}$ are the variances of the estimated expectation values in the equivariant sp-QCNN and the randomized model, respectively [see Fig.~\ref{fig: result_eff} (a) for example].
This quantification indicates that the equivariant sp-QCNN can achieve the same accuracy as the randomized model only with $1/r$ times fewer measurement shots.
Figure~\ref{fig: result_eff} (b) shows the change in the relative measurement efficiency during training.
In this numerical experiment, we assume that the equivariant sp-QCNN and the randomized model have the same parameters $\bm{\theta}$ to evaluate the relative measurement efficiency between them, where the parameters are optimized through training the equivariant sp-QCNN.
We observe that the measurement efficiency remains high during training: it begins at about 7.7 and converges to 6.3.
That is, the equivariant sp-QCNN can ideally reduce the required measurement shots by at least $1/6.3$ times in this problem.
Note that this calculation of $n=8$ alone is not sufficient to prove that the equivariant sp-QCNN can improve the measurement efficiency by a factor of $\mO(n)$.
Nevertheless, a previous study numerically verified the $\mO(n)$ times improvement for a translationally equivariant sp-QCNN~\cite{Chinzei2024-nm}, suggesting that, with our numerical results, similar improvements are available even for other symmetries.

\section{Conclusions} \label{sec: conclusions}

In this work, we have proposed equivariant sp-QCNNs, an efficient framework that integrates circuit splitting with equivariance for general symmetries.
By maximally leveraging the qubit resource via circuit splitting, the equivariant sp-QCNN ideally improves the measurement efficiency by a factor of $\mO(n)$ compared to the conventional equivariant QCNN.
Furthermore, the equivariance, along with QCNN's intrinsic nature, leads to high trainability and generalization performance.
We have introduced a group-theoretical method of constructing the equivariant sp-QCNN for spatial symmetries, establishing the basis for general model design.
The numerical experiment for the specific classification task has demonstrated that our model outperforms the conventional equivariant and non-equivariant QCNNs in terms of the resources required for training, highlighting the effectiveness of the equivariant sp-QCNN.

As a future research direction, applying the equivariant sp-QCNN to quantum many-body problems is intriguing~\cite{Wu2024-iy}.
It is known that the conventional QCNN has a similar structure to the multiscale entanglement renormalization ansatz (MERA), a representative tensor network model describing one-dimensional critical quantum systems~\cite{Vidal2007-yf}.
On the other hand, the sp-QCNN has a similar structure to the branching MERA, which is known as a good ansatz for describing higher-dimensional quantum critical systems~\cite{Evenbly2014-dk}.
Thus, the sp-QCNN would be suitable for representing higher dimensional critical phenomena in various quantum algorithms, such as variational quantum eigensolvers~\cite{Peruzzo2014-oj}.
This may shed light on unsolved problems in physics, such as identifying the phase diagrams of the Hubbard model~\cite{Arovas2022-ed} and the kagome antiferromagnetic Heisenberg model~\cite{Yan2011-df}.

Finally, we discuss the potential quantum advantages of our model in terms of classical simulability.
A critical conjecture on QNNs has recently been raised, stating that provably barren plateau-free models are classically simulable~\cite{Cerezo2023-hz}.
This conjecture may prevent the exponential quantum advantages of many QNN models.
In particular, a recent paper~\cite{Bermejo2024-dg} theoretically and numerically demonstrated the classical simulability of QCNNs, including our sp-QCNN, for locally easy datasets (i.e., datasets that can be classified by only local information).
Whereas these results appear to show the absence of quantum advantages by QCNNs, these papers also mention some caveats in the conjecture.
Here, we highlight some of them as examples.
First, it remains unclear whether QCNNs can show quantum advantages for classically nontrivial (i.e., not locally easy) datasets.
While such datasets seem to have strong entanglement and cause barren plateaus~\cite{Ortiz-Marrero2021-lx}, some approaches, such as warm start~\cite{Dborin2021-pp, Mele2022-ek, Grimsley2023-gq, Rudolph2023-lz, Puig-i-Valls2024-bo} and curriculum learning~\cite{Tran2025-qh, Recio-Armengol2024-qv}, may avoid the poor trainability issue.
Second, although the papers mainly show the absence of {\it exponential} quantum speedup by QCNNs, the possibility of {\it polynomial} quantum speedup has not been precluded.
In other words, there is no guarantee that the classical simulation process for training QCNNs is faster than executing the original quantum algorithm on compatible hardware~\cite{Lerch2024-cv}.
Our research presents a superior method for accelerating the training of QCNNs on a real quantum computer.
Third, even if the quantum advantages of QCNNs are absent in the training process, QCNNs can be useful in the inference stage, where one
trains a QCNN on a classical computer and then uses the classically optimized parameters to implement a QCNN on a quantum computer for the inference of new data. 
This approach can avoid the need to obtain classical representations of quantum states from new data instances, potentially reducing quantum resources for inference.
To fairly assess these potential advantages in practical scenarios, the disparity between classical and quantum execution times should be considered. 
For example, the gate operation time on superconducting quantum computers is on the order of tens to hundreds of nanoseconds~\cite{Arute2019-im, Kjaergaard2020-ns}, whereas classical computers typically operate with a clock cycle of less than one nanosecond ($=$1 GHz). 
Furthermore, the overhead associated with quantum error correction additionally prolongs the effective execution time of quantum computers.
Given these negative and positive perspectives, all QNNs, not just QCNNs, require more thorough and comprehensive research related to classical simulability and barren plateaus to realize genuine quantum advantages.

\section*{Acknowledgments}

Fruitful discussions with Yuichi Kamata and Nasa Matsumoto are gratefully acknowledged.

\appendix

\section{Scaling in cost function variance} \label{ap: variance}

\begin{figure*}[t]
	\centering
	\includegraphics[width=\linewidth]{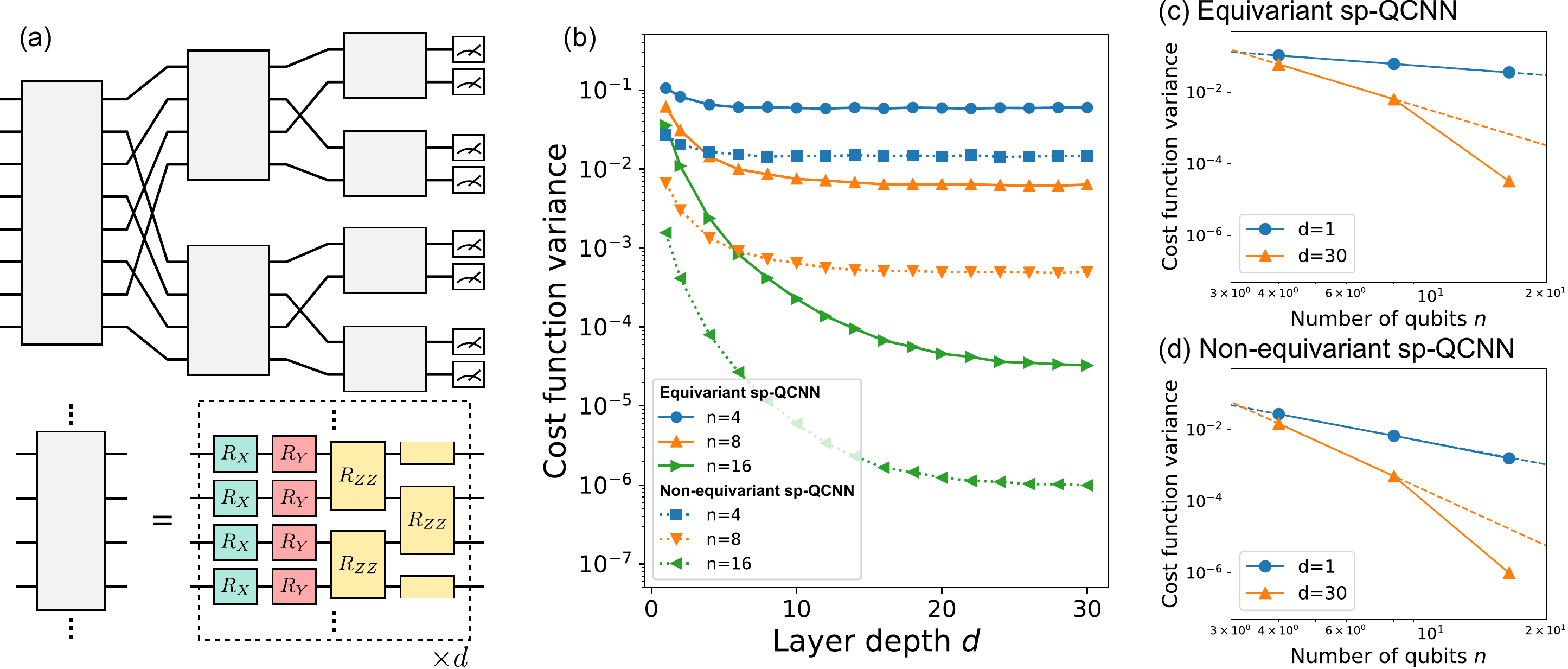}
	\caption{
    (a) Quantum circuit structure used in numerical calculations.
    (b) The variance of the cost function ${\rm Var}_\theta[C(\theta)]$ as a function of the layer depth $d$ for the equivariant and non-equivariant sp-QCNNs with $n=4, 8$, and $16$.
    The variance is calculated by sampling 10000 parameter points from the uniform distribution.
    (c)--(d) The log-log plots of the cost function variance ${\rm Var}_\theta[C(\theta)]$ versus the number of qubits $n$ for $d=1$ and $30$.
    The dashed straight lines connect the first two data points of $n=4$ and $8$, respectively.
	}
	\label{fig: barren_plateau}
\end{figure*}

Here, we numerically demonstrate the absence of barren plateaus in the equivariant sp-QCNN, which is the most crucial factor for scalable QML.
The barren plateau phenomenon is characterized by the exponential decay of cost function variance in the number of qubits $n$: ${\rm Var}_\theta[C(\theta)]\sim 1/b^n$, where $C(\theta)=\tr[U(\theta) \rho U^\dag(\theta) O]$ is the cost function, ${\rm Var}_\theta$ denotes the variance in the parameter space of $\theta$, and $b>1$ is a constant~\cite{McClean2018-qf, Cerezo2021-tq, Ortiz-Marrero2021-lx, Holmes2022-uk}.
If this phenomenon occurs, an exponential amount of measurement resources is required to optimize QNNs, thereby hindering the potential exponential quantum advantages.

To verify the absence of barren plateaus in our model, we consider a translationally equivariant sp-QCNN for simplicity.
The number of qubits calculated is $n=4,8$, and $16$, as the circuit has a consistent architecture when $n=2^k$ ($k$ is an integer). 
The circuit structure of the translationally equivariant sp-QCNN is illustrated in Fig.~\ref{fig: barren_plateau} (a).
The unitary is given by
\begin{align}
    U(\theta) = V^{(L)}(\theta_{L}) \cdots V^{(1)}(\theta_1),
\end{align}
where $L=\log_2 n$, and $V^{(j)}$ is the $j$th convolutional layer.
Each layer is represented as
\begin{align}
    V^{(j)}(\theta_j) = \prod_{k=1}^d \left(  \prod_{\ell=1}^n e^{i\theta_{jk}^{3} Z_\ell Z_{\ell + 2^{j-1}}} \prod_{\ell=1}^n e^{i\theta_{jk}^{2} Y_\ell} \prod_{\ell=1}^n e^{i\theta_{jk}^{1} X_\ell} \right), \label{eq: Tsym spQCNN}
\end{align}
where $d$ is the layer depth.
Since the parameter $\theta_{jk}^{i}$ does not depend on the qubit index $\ell$, this unitary is translationally invariant: $[U(\theta),T]=0$, where $T$ is the translation operation (e.g., $T\ket{100\cdots0}=\ket{010\cdots0}$).
For comparison, we also consider a non-equivariant sp-QCNN, which shares the same circuit architecture as the equivariant one but allows the parameter $\theta_{jk}^{i}$ to vary across different qubits [i.e., $\theta_{jk}^{i}$ is replaced with $\theta_{jk\ell}^{i}$ in Eq.~\eqref{eq: Tsym spQCNN}].
We set the observable to be $O=\sum_{j=1}^n X_j/n$ and the initial state to be $\rho=(\ket{0}\bra{0})^{\otimes n}$.

Figure~\ref{fig: barren_plateau} (b) shows ${\rm Var}_\theta[C(\theta)]$ as a function of the layer depth $d$ for both equivariant and non-equivariant sp-QCNNs with $n=4,8$ and $16$.
We observe that the variance monotonically decreases with $d$ and then converges to a finite value, which becomes smaller as $n$ increases.
Furthermore, ${\rm Var}_\theta[C(\theta)]$ for the equivariant sp-QCNN is consistently larger than that of the non-equivariant one across all $d$ and $n$. 
This suggests that the equivariance restricts the circuit expressivity, thereby leading to a larger variance in the cost function.

We analyze the scaling of the cost function variance with respect to $n$.
Figures~\ref{fig: barren_plateau} (c) and (d) present the log-log plots of ${\rm Var}_\theta[C(\theta)]$ versus $n$ for shallow-layer ($d=1$) and deep-layer ($d=30$) cases, respectively.
For both equivariant and non-equivariant models, the data points for $d=1$ fall on a straight line, which indicates that ${\rm Var}_\theta[C(\theta)]$ exhibits polynomial scaling for the shallow-layer case.
Conversely, the data points for $d=30$ do not fall on a straight line but display an upward convex shape, implying super-polynomial decay in the deep-layer case.
This super-polynomial (typically exponential) scaling is consistent with the results of recent theoretical studies proving that ${\rm Var}_\theta[C(\theta)]$ in the deep-layer limit is inversely proportional to the circuit expressivity, which is exponential in our model~\cite{Ragone2024-hl, Fontana2024-ky}.
To summarize, the sp-QCNNs can avoid the barren plateaus by setting the layer depth to be shallow (more rigorously, a constant independent of $n$).

We offer several caveats.
First, the number of qubits $n$ investigated here is limited by available computational power.
Identifying a more accurate scaling relationship will necessitate larger-scale simulations beyond $n=16$.
Furthermore, although this numerical experiment considered the translationally equivariant sp-QCNN, we expect other types of equivariant sp-QCNN can also avoid barren plateaus.
In fact, even the non-equivariant sp-QCNN, which should exhibit a larger cost function variance than general equivariant sp-QCNNs, also does not show the barren plateaus if the layers are shallow.
Finally, addressing the true scalability of QNNs must also encompass other crucial factors, such as the extensive number of poor local minima~\cite{Anschuetz2022-gb}.
Understanding and resolving these remaining challenges is important for realizing practical QNNs.

\section{Details of circuit splitting} \label{ap: subgroup method}

In this Appendix, we provide the details of circuit spitting in the subgroup method.

\subsection{Sufficient conditions for circuit splitting}

This section provides sufficient conditions for appropriate circuit splitting, which forms the basis for the subgroup method.
Before moving on to the details, we summarize the requirements for the $G$-equivariant circuit splitting $\mQ^{(\ell)}=\{Q_i^{(\ell)}\}_i$ as follows:
\begin{enumerate}
    \item $G$-invariance:
    \begin{align}
        \text{$g(\mQ^{(\ell)})=\mQ^{(\ell)}$ for $\forall g\in G$}. \label{apeq: cond1}
    \end{align}
    \item Branches do not merge:
    \begin{align}
        \text{$\forall i, \exists j$ s.t. $Q_i^{(\ell+1)} \subseteq Q_j^{(\ell)}$}. \label{apeq: cond2}
    \end{align}
    \item Branches are a partition of qubits $\Qb=[n]$:
    \begin{align}
        &\text{$Q_i^{(\ell)} \cap Q_j^{(\ell)} = \varnothing$ for $i\neq j$}, \label{apeq: cond3} \\
        &\text{$\bigcup_i Q_i^{(\ell)} = \Qb$}. \label{apeq: cond4}
    \end{align}
\end{enumerate}
In what follows, we provide sufficient conditions for these three requirements one by one.

For convenience, we recall the following terms regarding the action of $G$ on $\Qb$.
\setcounter{dfn}{0}
\begin{dfn}[$G$-equivalence of qubits]
    We say that qubits $q_1,q_2 \in \Qb$ are {\it $G$-equivalent} and denote $q_1\sim q_2$ if and only if there exists $g\in G$ such that $g(q_1) = q_2$.
\end{dfn}
\begin{dfn}[$G$-independence of qubits]
    We say that a subset of qubits $Q_1 \subseteq \Qb$ is {\it $G$-independent} if and only if $\forall q_1,q_2\in Q_1$ ($q_1\neq q_2$) are not $G$-equivalent. 
\end{dfn}
\begin{dfn}[$G$-completeness of qubits]
    We say that a subset of qubits $Q_1 \subseteq \Qb$ is {\it $G$-complete} if and only if, for any $q\in \Qb$, there exists $q_1\in Q_1$  that is $G$-equivalent to $q$.
\end{dfn}

\subsubsection{Sufficient condition for $G$-invariance}

Here, we show that the subgroup method introduced in the main text ensures the $G$-invariance of Eq.~\eqref{apeq: cond1}.
To this end, we first prove the following theorem, which is the core of this method. 
\begin{thm} \label{thm:split}
Let $H$ be a subgroup of $G$ and $P$ a subset of $\Qb$.
Given the coset decomposition $G=\bigsqcup_{i=1}^{s} C^H_i$, we define
\begin{align}
Q_i=C^H_i(P)=\{g(q) \,|\, g\in C^H_i, q\in P \}.
\end{align}
Then, $\mQ=\{Q_i\}_{i\in [s]}$ is $G$-invariant.
\end{thm}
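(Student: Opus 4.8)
The plan is to show that $\mQ=\{Q_i\}_{i\in[s]}$ is $G$-invariant, meaning $g(\mQ)=\mQ$ for all $g\in G$; that is, the action of any $g$ permutes the set of branches $\{Q_1,\dots,Q_s\}$ among themselves. The natural strategy is to prove that $g$ sends each branch $Q_i$ to another branch $Q_j$, and that this assignment $i\mapsto j$ is a bijection on the index set $[s]$. Since $[s]$ is finite, it suffices to establish that the map is well-defined and injective.

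First I would compute $g(Q_i)$ directly from the definition. We have
\begin{align}
g(Q_i)&=g\bigl(C^H_i(P)\bigr)
=\bigl\{g(g'(q))\,\big|\,g'\in C^H_i,\,q\in P\bigr\} \notag \\
&=\bigl\{(gg')(q)\,\big|\,g'\in C^H_i,\,q\in P\bigr\}
=(gC^H_i)(P), \notag
\end{align}
where the middle equality uses that the group action composes, $g(g'(q))=(gg')(q)$. The key input is Eq.~\eqref{eq: cosetmap}, which states that left-multiplication by $g$ carries one coset onto another: $gC^H_i=C^H_j$ for some $j$. Substituting this gives $g(Q_i)=C^H_j(P)=Q_j$, so indeed $g$ maps each branch to a branch.

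To finish, I would argue that $g$ acts as a bijection on $\{Q_i\}$. The cleanest route is to note that the coset relabeling $C^H_i\mapsto gC^H_i$ is itself a bijection on the set of cosets: its inverse is left-multiplication by $g^{-1}$, using the coset disjointness property Eq.~\eqref{eq: coset2} together with $|gC^H_i|=|C^H_i|$. Since the branch $Q_i$ is determined by the coset $C^H_i$ via $Q_i=C^H_i(P)$, and $g$ permutes the cosets, $g$ correspondingly permutes the branches. Hence $g(\mQ)=\mQ$, establishing $G$-invariance.

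The main subtlety, rather than any hard obstacle, is a potential ambiguity: distinct cosets could in principle produce the same branch if $C^H_i(P)=C^H_j(P)$ for $i\neq j$, which would mean the correspondence between cosets and branches is not injective. This does not threaten the $G$-invariance claim itself, since permuting the cosets still permutes whatever branches they define, but one should be careful to phrase the bijection at the level of cosets (where disjointness is guaranteed by Eq.~\eqref{eq: coset2}) and then transport it to the branches, rather than assuming the branches are automatically distinct. The disjointness and equal-cardinality of branches is precisely the content handled separately by the $G$-independence and $G$-completeness conditions in the subsequent sections, so for this theorem I would keep the argument purely at the coset level and invoke only Eqs.~\eqref{eq: cosetmap} and \eqref{eq: coset2}.
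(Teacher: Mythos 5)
Your proof is correct and follows essentially the same route as the paper's: first show via Eq.~\eqref{eq: cosetmap} that $g(Q_i)=(gC^H_i)(P)=C^H_j(P)=Q_j$, then conclude that $g$ permutes the branches because it acts invertibly. The only difference is cosmetic: the paper establishes the permutation by proving injectivity directly on the branches (assuming $g(Q_i)=g(Q_j)$ with $Q_i\neq Q_j$ and applying $g^{-1}$ to get a contradiction), whereas you lift the bijection to the coset level and transport it to the branches; your added remark that distinct cosets might define identical branches is a fair point of care, though both formulations handle it, since the paper's injectivity argument operates on $\mQ$ as a set.
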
   

\begin{proof}
We prove this theorem by showing that (i) $g\in G$ maps $Q_i\in\mQ$ to another $Q_j\in\mQ$
[i.e., $g(Q_i) = Q_j$] and that (ii) the map $g:\mQ\to\mQ$ is a bijection.

First, we prove the statement (i).
According to the group theory, the action of $g\in G$ changes a coset $C^H_i$ to another coset $C^H_j$, namely $gC^H_i = C^H_j$.
Using this property, we can show the statement (i) as
\begin{align}
    g(Q_i)=g C^H_i(P) = C^H_j(P)=Q_j.
\end{align}

Next, we prove the statement (ii).
Because $\mQ$ is a finite set, it suffices to show that the map $g$ is an injection.
We here give the proof by contradiction. 
Assume $g(Q_i)=g(Q_j)$ for $Q_i\neq Q_j$.
Then, multiplying $g^{-1}$ on both sides, we have $Q_i=Q_j$, which contradicts $Q_i \neq Q_j$ in the assumption.
This shows that $g(Q_i) \neq g(Q_j)$ for $Q_i\neq Q_j$, proving that the map $g$ is an injection and thus a bijection.
\end{proof}

This theorem proves the following corollary, which shows that the circuit splitting constructed by the subgroup method is guaranteed to be $G$-invariant.
\begin{cor} \label{cor:multisplit}
Let $H_\lambda$ be a subgroup of $G$ and $P_\lambda$ a subset of $\Qb$ ($\lambda=1,\cdots,\Lambda$).
Given the coset decompositions $G=\bigsqcup_{i=1}^{s_\lambda} C^{H_\lambda}_i$, we define
\begin{align}
Q_{\lambda,i}=C^{H_\lambda}_i(P_\lambda) = \{g(q) \,|\, g\in C^{H_\lambda}_i, q\in P_\lambda \}.
\end{align}
Then, $\mQ=\{Q_{\lambda,i}\}_{\lambda\in [\Lambda], i\in [s_\lambda]}$ is $G$-invariant.
\end{cor}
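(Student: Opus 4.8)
The plan is to obtain this corollary as an immediate consequence of Theorem~\ref{thm:split}, which already establishes the single-subgroup case $\Lambda=1$. The key structural observation is that the full collection $\mQ$ splits into $\Lambda$ blocks indexed by $\lambda$, and that the action of any $g\in G$ preserves each block separately, so $G$-invariance of the whole follows from $G$-invariance of the parts.

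First I would fix an arbitrary $g\in G$ and, for each $\lambda\in[\Lambda]$, isolate the sub-collection $\mQ_\lambda=\{Q_{\lambda,i}\}_{i\in[s_\lambda]}$. Because $Q_{\lambda,i}=C^{H_\lambda}_i(P_\lambda)$ is precisely the construction of Theorem~\ref{thm:split} applied to the single subgroup $H=H_\lambda$ and subset $P=P_\lambda$, that theorem directly yields $g(\mQ_\lambda)=\mQ_\lambda$; concretely, $g$ merely permutes the branches $Q_{\lambda,1},\dots,Q_{\lambda,s_\lambda}$ among themselves.

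Next I would reassemble the blocks. Interpreting the action on a collection as $g(\mQ)=\{g(Q)\mid Q\in\mQ\}$, this action distributes over unions, so that
\begin{align}
    g(\mQ)=g\!\left(\bigcup_{\lambda=1}^{\Lambda}\mQ_\lambda\right)=\bigcup_{\lambda=1}^{\Lambda} g(\mQ_\lambda)=\bigcup_{\lambda=1}^{\Lambda}\mQ_\lambda=\mQ,
\end{align}
which is exactly the $G$-invariance asserted in Eq.~\eqref{apeq: cond1}.

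I do not anticipate any genuine difficulty: the statement is a bookkeeping extension of Theorem~\ref{thm:split} from one subgroup to several, and the only point warranting care is the elementary observation that the group action commutes with taking unions of collections, so invariance can be checked block-by-block. I would also flag what the argument does \emph{not} claim: it says nothing about whether distinct index pairs $(\lambda,i)$ produce distinct or mutually disjoint branches. Overlaps between different $\lambda$-blocks are possible and are controlled separately by the $G$-independence and $G$-completeness conditions together with Eqs.~\eqref{apeq: cond3}--\eqref{apeq: cond4}, rather than by the $G$-invariance proved here.
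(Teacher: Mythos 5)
Your proof is correct and follows essentially the same route as the paper: the paper likewise partitions $\mQ$ into the sub-collections $\mathcal{S}_\lambda=\{Q_{\lambda,i}\}_{i\in[s_\lambda]}$ (your $\mQ_\lambda$), invokes Theorem~\ref{thm:split} to get $G$-invariance of each block, and concludes that the union is $G$-invariant. Your added remarks --- that the action distributes over unions and that the corollary makes no claim about disjointness of branches --- are accurate clarifications but do not change the argument.
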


\begin{proof}
We consider subsets of $\mQ$ as follows:
\begin{align}
    \mathcal{S}_\lambda \equiv \{Q_{\lambda,i}\}_{i\in [s_\lambda]},
\end{align}
where $\mQ = \bigcup_\lambda \mathcal{S}_\lambda$.
According to Theorem~\ref{thm:split}, $\mathcal{S}_\lambda$ is $G$-invariant.
Therefore, the union of $\mathcal{S}_\lambda$, namely $\mQ$, is also $G$-invariant.
\end{proof}

\subsubsection{Sufficient condition for branches not to merge} \label{ap: cond2}

Here, we give a sufficient condition for branches not to merge in the pooling layer [Eq.~\eqref{apeq: cond2}].

\begin{thm} \label{thm: no merge}
Let $\mathcal{Q}^{(\ell)}=\{Q_{\lambda,i}^{(\ell)}\}$ be $G$-invariant branches constructed by the subgroup method with subgroups $\{H^{(\ell)}_\lambda\}$ and qubit subsets $\{ P^{(\ell)}_\lambda \}$.
If, for $\forall \lambda$, there exists $\lambda'$ such that 
\begin{align}
    H_\lambda^{(\ell+1)} \leq H_{\lambda'}^{(\ell)}, \quad P_\lambda^{(\ell+1)} \subseteq P_{\lambda'}^{(\ell)},
\end{align}
then $\mathcal{Q}^{(\ell)}$ and $\mathcal{Q}^{(\ell+1)}$ satisfy Eq.~\eqref{apeq: cond2}, i.e., the branches do not merge in the pooling layer.
\end{thm}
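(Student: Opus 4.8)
The plan is to reduce this multi-indexed "no merge" condition to one elementary fact about cosets of nested subgroups, combined with the obvious monotonicity of the set-action $C(P)=\{g(q)\,|\,g\in C,\,q\in P\}$ in both of its arguments. Fix an arbitrary branch $Q_{\lambda,i}^{(\ell+1)}=C_i^{H_\lambda^{(\ell+1)}}(P_\lambda^{(\ell+1)})$ of the $(\ell+1)$th layer. By the hypothesis there is an index $\lambda'$ with $H_\lambda^{(\ell+1)}\leq H_{\lambda'}^{(\ell)}$ and $P_\lambda^{(\ell+1)}\subseteq P_{\lambda'}^{(\ell)}$, and the goal is to exhibit a coset index $j$ such that $Q_{\lambda,i}^{(\ell+1)}\subseteq Q_{\lambda',j}^{(\ell)}$. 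Since each single-index branch in Eq.~\eqref{apeq: cond2} corresponds to a unique pair $(\lambda,i)$ under the labelling of Corollary~\ref{cor:multisplit}, proving this containment for every such pair establishes the theorem.

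First I would isolate the only nontrivial ingredient as a coset-containment lemma: if $H\leq K$ are subgroups of $G$, then every left coset of $H$ is contained in a unique left coset of $K$. This is immediate—writing $C_i^H=g_iH$ for a representative $g_i$, the inclusion $H\subseteq K$ gives $g_iH\subseteq g_iK$, and $g_iK$ is a single coset $C_j^K$ of $K$; uniqueness of $j$ follows because distinct cosets of $K$ are disjoint by Eq.~\eqref{eq: coset2}. Applying this with $H=H_\lambda^{(\ell+1)}$ and $K=H_{\lambda'}^{(\ell)}$ selects the required coset index $j$ with $C_i^{H_\lambda^{(\ell+1)}}\subseteq C_j^{H_{\lambda'}^{(\ell)}}$ as subsets of $G$.

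The remaining step is pure bookkeeping. Take any element $g(q)\in Q_{\lambda,i}^{(\ell+1)}$, so $g\in C_i^{H_\lambda^{(\ell+1)}}$ and $q\in P_\lambda^{(\ell+1)}$. By the lemma $g\in C_j^{H_{\lambda'}^{(\ell)}}$, and by hypothesis $q\in P_{\lambda'}^{(\ell)}$; hence $g(q)\in C_j^{H_{\lambda'}^{(\ell)}}(P_{\lambda'}^{(\ell)})=Q_{\lambda',j}^{(\ell)}$. This yields $Q_{\lambda,i}^{(\ell+1)}\subseteq Q_{\lambda',j}^{(\ell)}$, which is exactly the non-merging requirement of Eq.~\eqref{apeq: cond2} in the doubly-indexed notation.

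I do not expect a genuine obstacle here: the entire content is the coset-containment lemma, and everything else is monotonicity of the action together with the translation between the single-index labelling $\{Q_i\}$ used in Eq.~\eqref{apeq: cond2} and the double-index labelling $\{Q_{\lambda,i}\}$ produced by the subgroup method. The one point deserving a sentence of care is that the selected coset $C_j^{H_{\lambda'}^{(\ell)}}$ must be the same for \emph{all} qubits in $P_\lambda^{(\ell+1)}$; but this is automatic, since the inclusion $C_i^{H_\lambda^{(\ell+1)}}\subseteq C_j^{H_{\lambda'}^{(\ell)}}$ is a statement about the coset as a whole and is independent of the qubit on which the group elements act.
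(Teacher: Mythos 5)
Your proof is correct and follows essentially the same route as the paper: the key coset-containment fact (every coset of $H_\lambda^{(\ell+1)}$ lies inside a unique coset of $H_{\lambda'}^{(\ell)}$ when $H_\lambda^{(\ell+1)} \leq H_{\lambda'}^{(\ell)}$) followed by monotonicity of $C(P)$ in both arguments, yielding $Q_{\lambda,i}^{(\ell+1)} \subseteq C_j^{H_{\lambda'}^{(\ell)}}(P_\lambda^{(\ell+1)}) \subseteq Q_{\lambda',j}^{(\ell)}$. The only difference is cosmetic: you spell out the proof of the coset-containment lemma ($g_iH \subseteq g_iK$ plus disjointness of cosets), which the paper simply cites as a standard group-theoretic fact.
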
 

\begin{proof}
According to the group theory, if $H_\lambda^{(\ell+1)} \leq H_{\lambda'}^{(\ell)}$, any coset of $H_\lambda^{(\ell+1)}$ is included in a corresponding coset of $H_{\lambda'}^{(\ell)}$ as
\begin{align}
\forall i, \exists j \,\,\,\text{s.t.}\,\,\, C^{H_\lambda^{(\ell+1)}}_i \subseteq C^{H_{\lambda'}^{(\ell)}}_j,
\end{align}
where $C^{H_\lambda^{(\ell+1)}}_i$ and $C^{H_{\lambda'}^{(\ell)}}_j$ are cosets of $H_\lambda^{(\ell+1)}$ and $H_{\lambda'}^{(\ell)}$, respectively.
This leads to an inclusion relation as
\begin{align}
    Q_{\lambda,i}^{(\ell+1)}
    &= C^{H_\lambda^{(\ell+1)}}_i (P_\lambda^{(\ell+1)}) \notag\\ 
    &\subseteq C^{H_{\lambda'}^{(\ell)}}_j (P_\lambda^{(\ell+1)}) \notag\\
    &\subseteq C^{H_{\lambda'}^{(\ell)}}_j (P_{\lambda'}^{(\ell)}) \notag \\ 
    &=Q_{\lambda',j}^{(\ell)},
\end{align}
where we have used $C^{H_\lambda^{(\ell+1)}}_i \subseteq C^{H_{\lambda'}^{(\ell)}}_j$ in the second line and $P_\lambda^{(\ell+1)} \subseteq P_{\lambda'}^{(\ell)}$ in the third line.
This proves Eq~\eqref{apeq: cond2}.
\end{proof}

\subsubsection{Sufficient condition for branches to be a partition of qubits} \label{ap: cond3}

Here, we prove the following Theorem, providing a sufficient condition for branches to be a partition of qubits [Eqs.~\eqref{apeq: cond3} and \eqref{apeq: cond4}]:
\begin{thm} \label{thm: no overlap}
Consider branches $\{Q_{\lambda,i}\}$ defined by $\mH=\{H_\lambda\}$ and $\mP=\{P_\lambda\}$.
Then, the following conditions are sufficient for branches to be a partition of qubits, i.e., to satisfy Eqs.~\eqref{apeq: cond3} and \eqref{apeq: cond4}:
\begin{enumerate}
    \item[(i)] $|G(q)|/|H_\lambda(q)|=s_\lambda$ for $\forall q\in P_\lambda$,
    \item[(ii)] $\bigsqcup_\lambda P_\lambda$ is $G$-independent,
    \item[(iii)] $\bigsqcup_\lambda P_\lambda$ is $G$-complete,
\end{enumerate}
where $s_\lambda=|G|/|H_\lambda|$ is the number of independent cosets.
\end{thm}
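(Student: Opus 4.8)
The plan is to verify the two requirements separately: coverage [Eq.~\eqref{apeq: cond4}] and pairwise disjointness [Eq.~\eqref{apeq: cond3}] of the branches $\{Q_{\lambda,i}\}$. Before doing either, I would recast condition (i) into a purely group-theoretic form, since this is the workhorse of the whole argument. Writing $\text{Stab}_G(q)=\{g\in G \mid g(q)=q\}$ for the stabilizer and invoking the orbit–stabilizer theorem, $|G(q)|=|G|/|\text{Stab}_G(q)|$ and $|H_\lambda(q)|=|H_\lambda|/|\text{Stab}_{H_\lambda}(q)|$, so that
\begin{align}
\frac{|G(q)|}{|H_\lambda(q)|}=\frac{|G|}{|H_\lambda|}\cdot\frac{|\text{Stab}_{H_\lambda}(q)|}{|\text{Stab}_G(q)|}=s_\lambda\cdot\frac{|\text{Stab}_{H_\lambda}(q)|}{|\text{Stab}_G(q)|}.
\end{align}
Since $\text{Stab}_{H_\lambda}(q)=H_\lambda\cap\text{Stab}_G(q)\subseteq\text{Stab}_G(q)$, condition (i) is equivalent to $\text{Stab}_G(q)=\text{Stab}_{H_\lambda}(q)$, i.e.\ $\text{Stab}_G(q)\subseteq H_\lambda$ for every $q\in P_\lambda$.

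Coverage [Eq.~\eqref{apeq: cond4}] would then follow almost immediately from $G$-completeness (iii). Given an arbitrary $q\in\Qb$, completeness supplies some $p\in P=\bigsqcup_\lambda P_\lambda$, say $p\in P_\lambda$, with $p\sim q$; thus $q=g(p)$ for some $g\in G$. Decomposing $G=\bigsqcup_i C^{H_\lambda}_i$, the element $g$ lies in a unique coset $C^{H_\lambda}_i$, whence $q=g(p)\in C^{H_\lambda}_i(P_\lambda)=Q_{\lambda,i}$. Hence every qubit belongs to at least one branch.

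For disjointness [Eq.~\eqref{apeq: cond3}] I would assume $q\in Q_{\lambda,i}\cap Q_{\lambda',j}$ and derive $(\lambda,i)=(\lambda',j)$. Unpacking the definitions gives $q=g(p)=g'(p')$ with $g\in C^{H_\lambda}_i,\,p\in P_\lambda$ and $g'\in C^{H_{\lambda'}}_j,\,p'\in P_{\lambda'}$, so that $(g'^{-1}g)(p)=p'$ and therefore $p\sim p'$. Because $P$ is $G$-independent (ii), $p\sim p'$ forces $p=p'$; since the $P_\lambda$ are mutually disjoint, a single qubit lies in exactly one of them, so this already yields $\lambda=\lambda'$. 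It then remains to show $i=j$, which is where the reformulated condition (i) enters: from $g(p)=g'(p)$ we get $g'^{-1}g\in\text{Stab}_G(p)\subseteq H_\lambda$, so $g$ and $g'$ lie in the same left coset of $H_\lambda$, i.e.\ $C^{H_\lambda}_i=C^{H_\lambda}_j$ and $i=j$. Combining this with coverage proves that $\{Q_{\lambda,i}\}$ is a partition of $\Qb$.

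The main obstacle I anticipate is precisely the step forcing $i=j$: $G$-independence alone only pins down the source qubit $p$, not the coset, so a priori the same image $q$ could arise from two distinct cosets acting on the same point. The crux is recognizing that condition (i) is exactly the statement $\text{Stab}_G(p)\subseteq H_\lambda$, which excludes this; without it, a qubit whose $G$-stabilizer strictly exceeds its $H_\lambda$-stabilizer would be covered by several branches and destroy the partition. I would therefore foreground the orbit–stabilizer reformulation, after which everything else is routine coset bookkeeping.
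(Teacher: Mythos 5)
Your argument is correct, and it takes a genuinely different route from the paper's. The paper proves the same-$\lambda$ disjointness by contradiction with a counting argument: if two branches $Q_{\lambda,i}$ and $Q_{\lambda,i'}$ ($i\neq i'$) overlapped, then $|G(P_\lambda)|=\bigl|\bigcup_{i} Q_{\lambda,i}\bigr|<s_\lambda\,|H_\lambda(P_\lambda)|$, whereas condition (i) combined with the $G$-independence of $P_\lambda$ forces equality, since $|G(P_\lambda)|=\sum_{q\in P_\lambda}|G(q)|=s_\lambda\sum_{q\in P_\lambda}|H_\lambda(q)|=s_\lambda\,|H_\lambda(P_\lambda)|$; the cross-$\lambda$ case and coverage are then handled directly from (ii) and (iii), essentially as in your proof. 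You instead convert condition (i) via orbit--stabilizer into the containment $\mathrm{Stab}_G(q)\subseteq H_\lambda$ for every $q\in P_\lambda$, and run a pointwise coset computation: a common element $q=g(p)=g'(p')$ of two branches forces $p=p'$ and $\lambda=\lambda'$ by (ii), and then $g'^{-1}g\in\mathrm{Stab}_G(p)\subseteq H_\lambda$ puts $g$ and $g'$ in the same coset, so $i=j$. Your reformulation is a real gain in transparency: it identifies the exact group-theoretic content of the numerical hypothesis (i), makes the proof direct rather than by contradiction, and exhibits the precise failure mode that (i) excludes (a qubit whose $G$-stabilizer leaks outside $H_\lambda$ would lie in several branches). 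The paper's counting argument, conversely, needs no stabilizer machinery --- only orbit sizes and the disjointness of orbits granted by (ii) --- and disposes of all cosets at once. Since both proofs use (ii) and (iii) in the same way, the difference is concentrated entirely in the same-$\lambda$ disjointness step, where your stabilizer criterion and the paper's cardinality bound are two faces of the same fact.
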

\begin{proof}

We first prove that conditions (i) and (ii) are sufficient for Eq.~\eqref{apeq: cond3}.
By condition (ii), we have $G(P_\lambda) \cap G(P_{\lambda'}) = \varnothing$ for $\lambda \neq \lambda'$.
This leads to $Q_{\lambda,i} \cap Q_{\lambda',i'} = \varnothing$ for $\lambda \neq \lambda'$, where we have used $Q_{\lambda,i}\subseteq G(P_\lambda)$ and $Q_{\lambda',i'}\subseteq G(P_{\lambda'})$.
Thus, it suffices to prove $Q_{\lambda,i} \cap Q_{\lambda,i'} = \varnothing$ for $i\neq i'$.

We prove this by contradiction.
Assume that there exist $i\neq i'$ such that $Q_{\lambda,i} \cap Q_{\lambda,i'} \neq \varnothing$.
By definition of coset decomposition $G=\bigsqcup_{i=1}^{s_\lambda} C^{H_\lambda}_i$, we have
\begin{align}
    G(P_\lambda) 
    &= \{g(q) \,|\, g\in G, q\in P_\lambda\} \\
    &= \bigcup_{i=1}^{s_\lambda} \{g(q) \,|\, g\in C^{H_\lambda}_i, q\in P_\lambda\} \\
    &= \bigcup_{i=1}^{s_\lambda} Q_{\lambda,i}. \label{apeq: omake1}
\end{align}
Taking the norm on the leftmost and rightmost sides, the following inequality holds:
\begin{align}
    |G(P_\lambda)| = \left|\bigcup_{i=1}^{s_\lambda} Q_{\lambda,i} \right| < \sum_{i=1}^{s_\lambda} |Q_{\lambda,i}|, \label{apeq: lemma_EG}
\end{align}
where we have used the assumption that $Q_{\lambda,i} \cap Q_{\lambda,i'} \neq \varnothing$ for some $i\neq i'$.
Because of $|Q_{\lambda,i}|=|g_iH_\lambda(P_\lambda)|=|H_\lambda(P_\lambda)|$, the above inequality is reduced to
\begin{align}
    |G(P_\lambda)| < s_\lambda |H_\lambda(P_\lambda)|. 
\end{align}
Given that $P_\lambda$ is $G$-independent by condition (ii), we have
\begin{align}
    &|G(P_\lambda)| = \sum_{q\in P_\lambda} |G(q)|, \\
    &|H_\lambda(P_\lambda)| = \sum_{q\in P_\lambda} |H_\lambda(q)|,
\end{align}
and thus
\begin{align}
    \sum_{q\in P_\lambda} |G(q)| < s_\lambda \sum_{q\in P_\lambda} |H_\lambda(q)|. 
\end{align}
This contradicts condition (i), i.e., $|G(q)|/|H_\lambda(q)|=s_\lambda$ for all $q\in P_\lambda$.
Therefore, the assumption that there exist $i\neq i'$ such that $Q_{\lambda,i} \cap Q_{\lambda,i'} \neq \varnothing$ is incorrect, and conditions (i) and (ii) are sufficient for Eq.~\eqref{apeq: cond3}.

We finally prove that condition (iii) leads to Eq.~\eqref{apeq: cond4}.
From Eq.~\eqref{apeq: omake1}, we have $\bigcup_{\lambda,i} Q_{\lambda,i} = \bigcup_{\lambda} G(P_\lambda)$.
By condition (iii), we also have $\bigcup_{\lambda} G(P_\lambda) = \Qb$, thereby obtaining $\bigcup_{\lambda,i} Q_{\lambda,i} = \Qb$.
This proves that condition (iii) is sufficient for Eq.~\eqref{apeq: cond4}.
\end{proof}

Below, we say that a qubit $q\in \Qb$ is {\it well-behaved} for a subgroup $H$ if $|G(q)|/|H(q)|=|G|/|H|=s$ holds.
Note that confirming whether a qubit $q\in\Qb$ is well-behaved is easy unless $|G|$ is too large to track all elements of $G(q)$.

\begin{figure*}[t]
	\centering
	\includegraphics[width=0.8\linewidth]{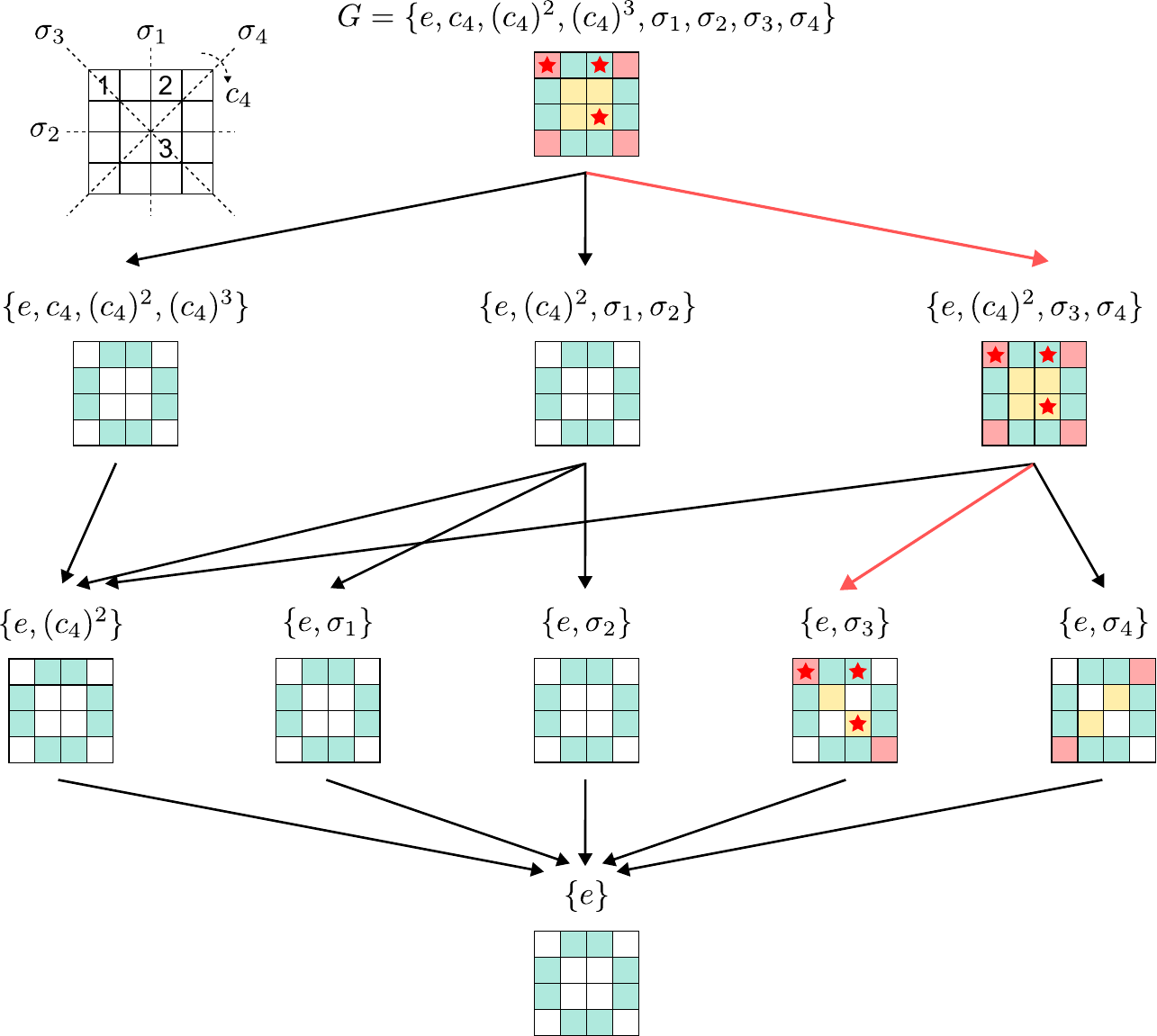}
	\caption{
All subgroups of $D_4$ and well-behaved qubits on the $4\times 4$ lattice.
The qubit represented by a colored (not white) box is well-behaved for the corresponding subgroup.
The same colored (red, yellow, or blue) boxes denote $G$-equivalent qubits.
We can reproduce the circuit splitting of Fig.~\ref{fig:pic1} by choosing the subgroups and qubit subsets marked by the red stars.
	}
	\label{fig: subgroups}
\end{figure*}

\subsection{Systematic method of circuit splitting} \label{ap: brute}


We are ready to present a systematic method of finding the circuit splitting that satisfies Eqs.~\eqref{apeq: cond1}--\eqref{apeq: cond4}.
By Corollary~\ref{cor:multisplit}, the subgroup method allows us to obtain the $G$-invariant branches $\mQ^{(\ell)}=\{Q^{(\ell)}_{\lambda,i}\}$ based on subgroups $\mH^{(\ell)}=\{H^{(\ell)}_\lambda\}$ and qubit subsets $\mP^{(\ell)}=\{P^{(\ell)}_\lambda\}$.
In addition to the $G$-invariance, the branches $\mQ^{(\ell)}$ must satisfy Eqs.~\eqref{apeq: cond2}--\eqref{apeq: cond4}.
According to Theorems~\ref{thm: no merge} and \ref{thm: no overlap}, the following conditions are sufficient:
\begin{align}
    &\text{1. $\forall \lambda, \exists \lambda'$ s.t. $H^{(\ell+1)}_\lambda \leq H^{(\ell)}_{\lambda'}$ and $P^{(\ell+1)}_\lambda \subseteq P^{(\ell)}_{\lambda'}$,} \label{apeq: reduced cond1} \\
    &\text{2. $|G(q)|/|H_\lambda^{(\ell)}(q)|=s_\lambda^{(\ell)}$ for $\forall q\in P_\lambda^{(\ell)}$,} \label{apeq: reduced cond2} \\
    &\text{3. $\bigsqcup_\lambda P_\lambda$ is $G$-independent and $G$-complete,} \label{apeq: reduced cond3}
\end{align}
where $s_\lambda^{(\ell)}=|G|/|H_\lambda^{(\ell)}|$.

In order to find $\mH^{(\ell)}$ and $\mP^{(\ell)}$ satisfying these conditions, we employ a brute-force method.
This method begins with considering all subgroups of $G$ and corresponding well-behaved qubits.
For instance, Fig.~\ref{fig: subgroups} shows all subgroups of $G=D_4$ for a $4\times 4$ qubits array.
The qubit $q\in\Qb$ represented by a colored (not white) box is well-behaved for the corresponding subgroup $H$.
The same colored (red, yellow, or blue) boxes, say $q_1, q_2 \in \Qb$, denote $G$-equivalent qubits that are mapped to each other by an action of $g\in G$ as $g(q_1)=q_2$.
Obtaining this diagram requires only a polynomial computational cost in the number of qubits $n$ and is thus practically possible unless $|G|$ is so large that all subgroups are not available.

Based on this diagram, we first set subgroups and qubit subsets of the final layer $\{ (H_\lambda^{(L)}, P_\lambda^{(L)}) \}_\lambda$.
We choose them such that 
\begin{enumerate}
    \item[] $\bigsqcup_\lambda P_\lambda^{(L)}$ contains one qubit of each color.
\end{enumerate}
The final layer constructed in this way, where $\bigsqcup_\lambda P_\lambda^{(L)}$ contains only colored (i.e., well-behaved) qubits and their colors are complete and not duplicated, necessarily satisfies Eqs.~\eqref{apeq: reduced cond2} and \eqref{apeq: reduced cond3}. 
For example, in Fig.~\ref{fig: subgroups}, we can choose the qubits marked by the red stars as $P=\{1,2,3\}$ with $H=\{e,\sigma_3\}$, reproducing the third layer in Fig.~\ref{fig:pic1}.

After determining $\{ (H_\lambda^{(L)}, P_\lambda^{(L)}) \}_\lambda$ of the final layer, we set $\{ (H_\lambda^{(\ell)}, P_\lambda^{(\ell)}) \}_\lambda$ layer by layer from $\ell=L$ to $\ell=1$.
Then, we are allowed to perform the following operations on $\{ (H_\lambda^{(\ell)}, P_\lambda^{(\ell)}) \}_\lambda$:
\begin{enumerate}
    \item Merge two qubit subsets $P_1$ and $P_2$ if their accompanying subgroups are the same:
    \begin{align}
        (H, P_1), (H, P_2) \to (H, P_1 \sqcup P_2).
    \end{align}
    \item Change a subgroup $H_1$ to a larger one $H_2$ (i.e., $H_1\leq H_2$):
    \begin{align}
        (H_1, P) \to (H_2, P).
    \end{align}
\end{enumerate}
We can perform these operations repeatedly to construct the $\ell$th circuit splitting from the $(\ell+1)$th one.
This construction trivially satisfies Eq.~\eqref{apeq: reduced cond1} and thus Eq.~\eqref{apeq: cond2}.
Furthermore, Eqs.~\eqref{apeq: reduced cond2} and \eqref{apeq: reduced cond3} hold even after the above two operations.
This is because $\bigsqcup_\lambda P_\lambda$ remains unchanged in these operations for Eq.~\eqref{apeq: reduced cond3}, and the following lemma explains Eq.~\eqref{apeq: reduced cond2}:
\begin{lem}
Let $H_1$ and $H_2$ be subgroups of $G$ satisfying $H_1\leq H_2$.
For $q\in\Qb$, if $|G(q)|/|H_1(q)|=s_1$ holds, then $|G(q)|/|H_2(q)|=s_2$ also holds, where we have defined $s_1=|G|/|H_1|$ and $s_2=|G|/|H_2|$.
\end{lem}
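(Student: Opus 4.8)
The plan is to recast the ``well-behaved'' condition $|G(q)|/|H(q)| = |G|/|H|$ entirely in terms of stabilizers, where it becomes a simple subgroup-containment statement that is manifestly monotone in $H$. First I would introduce the stabilizer $G_q = \{g\in G \,|\, g(q)=q\}$ and note that, for any subgroup $H\leq G$, the $H$-stabilizer of $q$ is $H\cap G_q$. The orbit--stabilizer theorem then gives $|G(q)| = |G|/|G_q|$ and $|H(q)| = |H|/|H\cap G_q|$, so that
\begin{align}
\frac{|G(q)|}{|H(q)|} = \frac{|G|}{|H|}\cdot\frac{|H\cap G_q|}{|G_q|} .
\end{align}

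Next I would observe that, since $H\cap G_q \subseteq G_q$, the factor $|H\cap G_q|/|G_q|$ is at most $1$, with equality exactly when $G_q \subseteq H$. Hence the well-behaved condition $|G(q)|/|H(q)| = |G|/|H|$ is equivalent to the clean statement $G_q \subseteq H$: the full $G$-stabilizer of $q$ already lies inside the subgroup. This is the crux of the argument, turning a ratio of orbit sizes into a containment of a \emph{fixed} subgroup, since $G_q$ does not depend on the choice of $H$.

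With this reformulation the lemma is immediate. The hypothesis $|G(q)|/|H_1(q)| = s_1$ means $G_q \subseteq H_1$; combined with $H_1 \leq H_2$ this gives $G_q \subseteq H_1 \subseteq H_2$, i.e.\ $G_q \subseteq H_2$, which is exactly the condition $|G(q)|/|H_2(q)| = s_2$. The only real obstacle is the first step --- recognizing that the orbit ratio collapses to a stabilizer ratio and identifying the equality case as $G_q \subseteq H$; once this characterization is in hand, monotonicity under $H_1 \leq H_2$ follows with no further computation.
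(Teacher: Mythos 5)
Your proof is correct, and it takes a genuinely different route from the paper's. The paper stays inside the coset machinery developed for the subgroup method: since $H_1\leq H_2$, every coset of $H_2$ is a disjoint union of cosets of $H_1$; the hypothesis $|G(q)|/|H_1(q)|=s_1$ forces (via Theorem~\ref{thm: no overlap} applied to the single-qubit subset $\{q\}$) the $s_1$ images $C^{H_1}_i(q)$ to be pairwise disjoint, and grouping them according to the coset refinement shows that the $s_2$ images $C^{H_2}_i(q)$ are also pairwise disjoint, each of size $|H_2(q)|$, whence $|G(q)|=s_2|H_2(q)|$. You instead invoke the orbit--stabilizer theorem to write $|G(q)|/|H(q)|=(|G|/|H|)\cdot|H\cap G_q|/|G_q|$ and identify the well-behaved condition with the containment $G_q\subseteq H$, after which monotonicity in $H$ is immediate. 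Both arguments are complete; yours is more self-contained (it does not lean on Theorem~\ref{thm: no overlap}) and yields a stronger by-product: an exact \emph{if and only if} characterization of well-behaved qubits as those whose full stabilizer lies in $H$, which goes beyond the one-directional implication the lemma asserts and neatly explains the diagram of well-behaved qubits in Fig.~\ref{fig: subgroups}. The paper's route, by contrast, reuses results already proved and keeps the appendix phrased in the single language of coset images, at the cost of a heavier counting argument. One small point worth making explicit in your write-up: both the orbit--stabilizer counts and the equality case $|H\cap G_q|=|G_q|\Rightarrow G_q\subseteq H$ use finiteness of $G$, which the paper assumes throughout.
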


\begin{proof}
Given that $H_1$ is a subgroup of $H_2$, any coset of $H_2$ is a disjoint union of cosets of $H_1$ as
\begin{align}
    C^{H_2}_i = \bigsqcup_{j\in \sigma_i} C^{H_1}_j \label{apeq: lem1 eq1}
\end{align}
with $\sigma_i \subseteq [s_1]$ ($|\sigma_i|=|H_2|/|H_1|$).
Meanwhile, by Theorem~\ref{thm: no overlap}, $|G(q)|/|H_1(q)|=s_1$ leads to 
\begin{align}
    G(q) = C^{H_1}_1(q) \sqcup \cdots \sqcup C^{H_1}_{s_1}(q), \label{apeq: lem1 eq2}
\end{align}
where $C^{H_1}_i(q) \cap C^{H_1}_j(q) = \varnothing$ holds for $i\neq j$.
Combining Eqs.~\eqref{apeq: lem1 eq1} and \eqref{apeq: lem1 eq2}, we have
\begin{align}
    G(q) = C^{H_2}_1(q) \sqcup \cdots \sqcup C^{H_2}_{s_2}(q),
\end{align}
where $C^{H_2}_i(q) \cap C^{H_2}_j(q) = \varnothing$ for $i\neq j$.
Since $|C^{H_2}_i(q)|=|H_2(q)|$ for all $i$, we obtain $|G(q)|=s_2 |H_2(q)|$.
\end{proof}

In Fig.~\ref{fig: subgroups}, we can construct earlier layers by modifying the subgroup as $H=\{e,\sigma_3\}\to \{e,(c_4)^2, \sigma_3,\sigma_4\} \to D_4$ with $P=\{1,2,3\}$ being fixed.
This reproduces the second and the first layers in Fig.~\ref{fig:pic1}.

\begin{figure*}[t]
	\centering
	\includegraphics[width=\linewidth]{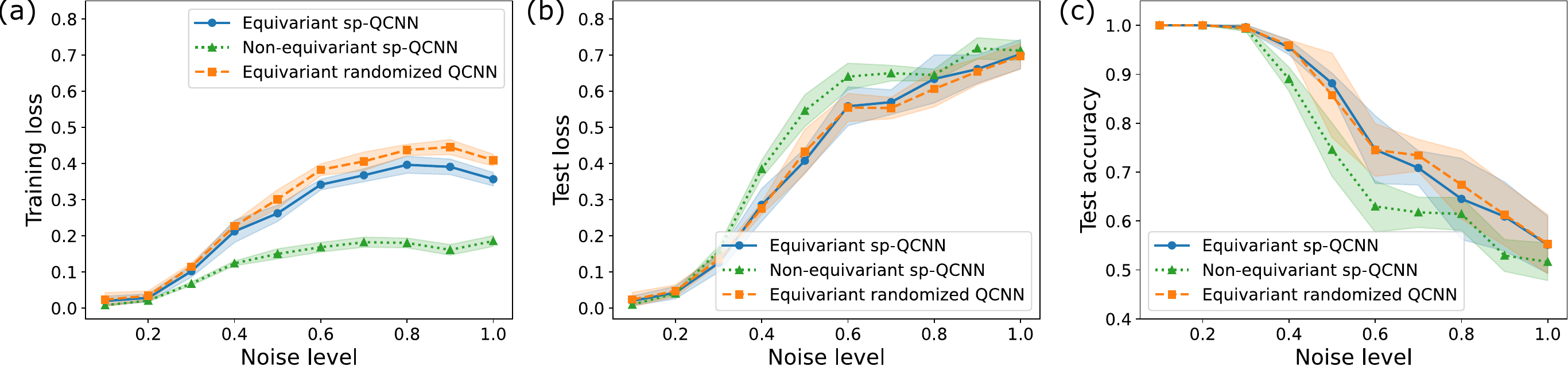}
	\caption{
(a)--(c)  Training loss, test loss, and test accuracy after sufficiently long training processes with $2N_t=20$ for several noise levels $\gamma$ on quantum data.
The circles, triangles, and squares denote the results for equivariant sp-QCNN, non-equivariant sp-QCNN, and equivariant randomized QCNN, respectively. 
The shaded areas indicate the standard deviation for the twenty sets of random initial parameters.  
	}
	\label{fig: noise level}
\end{figure*}

\section{Unitary circuit of numerical experiment} \label{ap: unitary ansatz}

Here, we describe the details of the unitary circuit in the numerical experiment.
For convenience, we assign each qubit a number from one to eight, as shown in Fig.~\ref{fig: demo_lattice}.
The unitary circuit of the equivariant sp-QCNN has three convolutional (or fully-connected) layers as
\begin{align}
    U(\bmt) =  V^{(3)}(\bmt^{(3)})V^{(2)}(\bmt^{(2)})V^{(1)}(\bmt^{(1)}).
\end{align}
Each convolutional layer consists of multiple branches:
\begin{align}
    &Q_1^{(1)} = \{1,2,3,4,5,6,7,8\}, \\
    &Q_1^{(2)} = \{1,4,6,7\}, \\
    &Q_2^{(2)} = \{2,3,5,8\}, \\
    &Q_1^{(3)} = \{1,4\}, \\
    &Q_2^{(3)} = \{6,7\}, \\
    &Q_3^{(3)} = \{2,3\}, \\
    &Q_4^{(3)} = \{5,8\}.
\end{align}
The first convolutional layer $V^{(1)}=V^{(1)}_1$ is given by 
\begin{align}
    &V_1^{(1)}(\bmt) \notag \\
    &= \prod_{i=1}^{d_1} \left( \left( \prod_{\braket{j,k}\in Q_1^{(1)}} R_{j,k}(\delta_{j,k}^i) \right) \left( \prod_{j\in Q_1^{(1)}} R_j(\bma_j^i) \right) \right), 
\end{align}
where we have defined
\begin{align}
    &R_j(\bma)=R_{X_j}(\alpha_1)R_{Z_j}(\alpha_2)R_{X_j}(\alpha_3), \\
    &R_{j,k}(\delta)=R_{Z_jZ_k}(\delta).
\end{align}
The $\braket{j,k}$ denotes the nearest neighbor qubit pair on the $2\times2\times2$ cubic lattice.
To ensure the equivariance, some rotation gates share the parameter values with other rotation gates as
\begin{align}
    &\delta_{1,2}^i = \delta_{2,4}^i = \delta_{4,3}^i = \delta_{3,1}^i, \\
    &\delta_{5,6}^i = \delta_{6,8}^i = \delta_{8,7}^i = \delta_{7,5}^i, \\
    &\delta_{1,5}^i = \delta_{2,6}^i = \delta_{3,7}^i = \delta_{4,8}^i, \\
    &\bma_1^i=\bma_2^i=\bma_3^i=\bma_4^i, \\
    &\bma_5^i=\bma_6^i=\bma_7^i=\bma_8^i.
\end{align}
The second convolutional layer $V^{(2)}=V^{(2)}_2 V^{(2)}_1$ is given by
\begin{align}
    &V_1^{(2)}(\bmt) \notag\\
    &= \prod_{i=1}^{d_2} \left( \left( \prod_{\braket{j,k}\in P_2} R_{j,k}(\delta_{j,k}^i) \right) \left( \prod_{j\in Q_1^{(2)}} R_j(\bma_j^i) \right)\right), \\
    &V_2^{(2)}(\bmt) = U_{c_4} V_1^{(2)}(\bmt) U_{c_4}^\dag,
\end{align}
where we have defined $P_2=\{(1,6),(6,4),(4,7),(7,1)\}$.
The parameters are shared as
\begin{align}
    &\delta_{1,6}^i = \delta_{6,4}^i = \delta_{4,7}^i = \delta_{7,1}^i, \\
    &\bma_1^i=\bma_4^i, \\
    &\bma_6^i=\bma_7^i.
\end{align}
The third convolutional (or fully-connected) layer $V^{(3)}=V^{(3)}_4 V^{(3)}_3 V^{(3)}_2 V^{(3)}_1$ is given by
\begin{align}
    &V_1^{(3)}(\bmt) = \prod_{i=1}^{d_3} R_{1,4}(\delta_{1,4}^i) R_1(\bma_1^i)R_4(\bma_4^i), \\
    &V_2^{(3)}(\bmt) = \prod_{i=1}^{d_3} R_{6,7}(\delta_{6,7}^i) R_6(\bma_6^i)R_7(\bma_7^i), \\
    &V_3^{(3)}(\bmt) = U_{c_4} V_1^{(3)}(\bmt) U_{c_4}^\dag, \\
    &V_4^{(3)}(\bmt) = U_{c_4} V_2^{(3)}(\bmt) U_{c_4}^\dag,
\end{align}
with parameter sharing
\begin{align}
    &\bma_1^i=\bma_4^i, \\
    &\bma_6^i=\bma_7^i.
\end{align}

\vspace{0.3cm}

\section{Numerical analysis of noise resilience} \label{ap: noise resilience}

We investigate the noise effect on the classification performance.
Figure~\ref{fig: noise level} shows numerical results for several noise levels $\gamma$ on quantum data (it is fixed at $\gamma=0.4$ in the main text).
In a wide range of noise levels, we observe the superior generalization performance of equivariant sp-QCNN and randomized QCNN, i.e., lower test loss and higher test accuracy, compared to the non-equivariant sp-QCNN.
Specifically, all the models achieve almost 100\% test accuracy for small $\gamma$, but the high generalization performance of equivariant models becomes obvious as the noise level $\gamma$ increases: the decreases in the test accuracy of the equivariant models due to noise are more modest than that of the non-equivariant model.
These results indicate that the equivariance can improve the model's generalization regardless of noise levels, while we need a more expressive model and more training data for better performance when noise is strong.

%

\end{document}